\newtheoremstyle{amit}% name
{7pt}% Space above
{7pt}% Space below
{}% Body font
{}% Indent amount
{\bf}% Theorem head font
{:}% Punctuation after theorem head
{.5em}% Space after theorem head
{}% Theorem head spec (can be left empty, meaning `normal')
\theoremstyle{amit}
\newtheorem{definition}{Definition}[section]
\newtheorem{prop}[definition]{Proposition}
\newtheorem{lemma}[definition]{Lemma}
\newtheorem{thm}[definition]{Theorem}
\newtheorem{corollary}[definition]{Corollary}
\newtheorem{rmk}[definition]{Remark}
\newcommand{\ssx}{\sigma}
\newcommand{\tsx}{\tau}
\newcommand{\lc}{\operatorname{lower}}
\newcommand{\uc}{\operatorname{upper}}
\newcommand{\Inc}{\mathsf{Inc}}
\newcommand{\Z}{\mathbb{Z}}
\newcommand{\Int}{\mathsf{Int}\,}
\newcommand{\Dgm}{\mathsf{Dgm}}
\newcommand{\field}{\mathsf{k}\,}
\newcommand{\rank}{\mathsf{rk}}
\renewcommand{\vec}{\mathsf{Vec}}
\newcommand{\birth}{\operatorname{birth}}
\title{Output-sensitive Computation of Generalized Persistence Diagrams for 2-filtrations
\thanks {
This work was supported by the Laboratory Directed Research and Development
Program (LDRD) of Lawrence Berkeley National Laboratory under U.S.\ Department
of Energy Contract No.\ DE-AC02-05CH11231 (DM).
This work is partially funded by Leverhulme Trust grant VP2-2021-008 (AP).
}}
\author[1]{Dmitriy Morozov}
\author[2]{Amit Patel}
\affil[1]{Lawrence Berkeley National Laboratory}
\affil[2]{Department of Mathematics, Colorado State University}
\date{}
\begin{document}
\maketitle

\begin{abstract}
    When persistence diagrams are formalized as the M\"obius inversion of the
    birth--death function, they naturally generalize to the multi-parameter
    setting and enjoy many of the key properties, such as stability, that we
    expect in applications.  The direct definition in the 2-parameter setting,
    and the corresponding brute-force algorithm to compute them, require
    $\Omega(n^4)$ operations, where $n$ is the complexity of the input.
    But the size of the generalized persistence
    diagram, $C$, can be as low as linear (and as high as cubic).  We elucidate
    a connection between the 2-parameter and the ordinary 1-parameter settings,
    which allows us to design an output-sensitive algorithm, whose running time
    is in $O(n^2 m + Cm)$, where $m$ is the number of simplices in the
    input.
\end{abstract}

\section{Introduction}
\label{sec:introduction}

An ordinary 1-parameter persistence diagram has a remarkable number of equivalent
definitions:
via persistent homology groups~\cite{ELZ02},
as in\-de\-com\-pos\-able summands of persistence
modules~\cite{ZoCa05,CadS10},
via well groups~\cite{well-groups-1D} or
persistence landscapes~\cite{Bub15},
and as the M\"obius inversion of
the rank invariant~\cite{Patel2018},
to name a few.
But extending
these to the multi-parameter setting leads to very different objects with wildly
different properties~\cite{CaZo09,well-groups,Vip20,kim2018generalized,McCPa20-edit-distance}.
%Most of the work on multi-parameter persistence, over the
%last two decades, has been focused on generalizing the former (indecomposable
%summand) definition.

The latter definition, although implicitly recognized in the
inclusion--exclusion formula used in the original proof of
stability~\cite{stability-persistence} and in size theory~\cite{size_theory},
received no attention until recently.
Recognizing this formula as a special case of the M\"obius inversion allows
for generalizations in many directions.
Kim and M\'emoli were the first to apply the M\"obius inversion to multiparameter persistence modules \cite{kim2018generalized}.
They define the persistence diagram as the M\"obius inversion of the generalized rank invariant.
We, however, are using a different definition.
For us, the persistence diagram is the M\"obius inversion of the
birth--death function~\cite{GregH1, GregH2}, specifically,
as defined by McCleary and Patel~\cite{McCPa20-edit-distance}
and further generalized by G\"ulem and McCleary~\cite{GulenMcCleary}.
The most important property of this approach is that it is functorial.
Functoriality plays a big role in justifying many of the choices we make in this paper.
It is important to note there are other approaches to persistence using the M\"obius inversion~\cite{signed_barcode, GradedPersistenceDiagrams, Asashiba}.
%
%The generalized persistence
%diagram is a set of integer-weighted intervals of the underlying poset,
%allowing for direct adaptation of applications that rely on having such a
%structure.
%It is also stable~\cite{McCPa20-edit-distance} in a sense that
%generalizes the bottleneck stability of 1-parameter
%persistence~\cite{stability-persistence}, and its construction is
%functorial~\cite{McCPa20-edit-distance}.

A question that remains open is how to efficiently compute the generalized
persistence diagram.
It is unclear how to take advantage of existing work.
Computing in\-de\-com\-pos\-able summands~\cite{DX19} doesn't give a generalized
persistence diagram, except in special cases for a different formulation~\cite{BLO20}.
Computing a
minimal presentation of a module~\cite{LW19,KR21} ought to help in general,
although not in the specific setting described in this paper.

We consider the case of the 2-filtration. We assume that the underlying
simplicial complex has $m$ simplices, and the sum of the number of minimal
grades at which the simplices appear --- the amount of information needed to
describe the filtration --- is $n \geq m$. ($n = m$ is the so-called 1-critical
case.)
It is possible to use the definition of the M\"obius inversion directly as an algorithm
(expressed below as \cref{cor:compact-formula}).
This formulation has $16 \cdot n^4$ terms,
and relies on the ability to compute the rank of any map between
a pair of homology groups in the 2-filtration.
An $O(m^4)$ algorithm for the latter task in the 1-critical setting is given, for
example, in \cite[Section 4.4.2]{morozov-phd}; it can be readily adapted to the
general setting.
Together the two observations lead to an $O(n^4)$ algorithm~\cite{BLO20}.
Unfortunately, because it has to examine all intervals in the 2-filtration, the
algorithm is also in $\Omega(n^4)$.

On the other hand, a generalized persistence diagram can be very sparse. Its
support, i.e., the number of non-zero intervals, can be as low as $m$, for
example, if the $2$-filtration is 1-critical and any monotone
path gives the same ordering of simplices. We use $C$ to denote the
number of non-zero intervals, the size of the output in our problem.

Our contributions are two-fold. After recapping the necessary background in
\cref{sec:background,sec:preliminaries}, we establish a connection, in
\cref{sec:mi-from-transpositions}, between the
non-zero intervals in the M\"obius inversion
%that defines the generalized persistence diagrams of the 2-filtration
and the pairing switches~\cite{vineyards} between simplices along four paths
through the 2-filtration. In \cref{sec:algorithm}, we develop an algorithm for
computing the generalized persistence diagram that traverses the 2-filtration
via a sequence of paths
%(each of which induces a 1-filtration)
by performing
transpositions of adjacent simplices.  It maintains extended
pairing information (that we call the \emph{birth curves}), which allows us to
compute all the intervals in the output-sensitive $O(n^2m + Cm)$ time.

%The direct definition
%\Remark{Downside: doing MIP directly from definition (i.e., by analyzing all
%        intervals) is in $\Omega(n^4)$.}

%\Remark{Add something about the ``related work'': Tamal's indecomposable
%        summands algorithm and Lesnick-Wright algortihm.}

\section{Background}
\label{sec:background}

%The poset $L_n$ is a lattice. That is, every $a, b \in L_n$ has a unique maximal lower bound, denoted
%$a \wedge b$, and a unique minimal upper bound, denoted $a \vee b$.
%This makes the poset of intervals $\overline{L_n}$ a lattice where $(a,b) \wedge (c,d) = (a \wedge c, b \wedge d)$
%and $(a,b) \vee (c,d) = (a \vee c, b \vee d)$.
%We say an intervals $(c,d)$ \emph{covers} an interval $(a,b)$, denoted $(a,b) \lessdot (c,d)$, if
%$(a,b) \nleq (c,d)$ and there is no interval $(a',b')$ such that $(a,b) \nleq (a',b') \lneq (c,d)$.

\subsection{M\"obius inversion}
Let $P$ be any finite poset.
For every pair of grades $a, b \in P$, the \emph{interval} $[a,b]$ is the subset $\{ x \in P : a \leq x \leq b \}$.
The set of all intervals $\Int P$ is a poset, where $[a , b] \leq [c, d]$
whenever $a \leq c$ and $b \leq d$.
A \emph{monotone function} between posets is a function $f : P \to Q$ such
that for every $a \leq b$ in $P$, $f(a) \leq f(b)$ in $Q$.
The operator $\Int$ takes $f$ to a monotone function
$\Int f : \Int P \to \Int Q$ where $\Int f [a,b] := \big[ f(a), f(b) \big]$.

We are  interested in the $\Z$-incidence algebra on $\Int P$, denoted $\Inc ( \Int P)$.
It is the set of all integral functions
$\alpha : \Int \Int P \to \Z$ along with two binary operations:
    \begin{align*}
    (\alpha + \beta)\big(  [a,b], [c,d] \big) &:= \alpha \big(  [a,b], [c,d] \big) + \beta \big(  [a,b], [c,d] \big) \\
    (\alpha \ast \beta)\big(  [a,b], [c,d] \big) &:= \sum_{[a,b] \leq [x,y] \leq [c,d]} \alpha\big(  [a,b], [x,y] \big)
	\cdot \beta\big(  [x,y], [c,d] \big).
    \end{align*}
The additive identity is the \emph{zero function}, and the multiplicative identity
is the \emph{delta function} defined as $\delta\big(  [a,b], [c,d] \big) = 1$, for $[a,b] = [c,d]$, and $0$ otherwise.
One may think of elements of $\Inc (\Int P)$ as square matrices where both columns and rows
are indexed by $\Int P$.
Addition is matrix addition, and multiplication is matrix multiplication.

We are interested in two special functions in $\Inc( \Int P)$: the zeta function and the M\"obius function.
The \emph{zeta function} is the function $\zeta\big(  [a,b], [c,d] \big) = 1$, for all $[a,b] \leq [c,d]$,
and $0$ otherwise.
The zeta function is invertible and its (multiplicative) inverse, called
the \emph{M\"obius function} $\mu$, can be described inductively as follows
(see~\cite[Proposition 1]{Rot64}):
    \begin{equation}
    \label{eq:mo_function}
    \mu \big( [a,b], [c,d] \big) =
    \begin{cases}
    1 & \text{for } [a,b] = [c,d] \\
    - \sum_{ [a,b] < [x,y] \leq [c,d]} \mu\big(  [x,y], [c,d] \big) & \text{for } [a,b] < [c,d] \\
    0 & \text{otherwise.}
    \end{cases}
    \end{equation}

Given a function $f : \Int P \to \Z$, there is a unique function $g : \Int P \to \Z$ such that
	\begin{equation}
	\label{eq:mobius_inversion}
	f[c,d] = \sum_{[a,b] \leq [c,d]} g [a,b].
	\end{equation}
This unique function $g$ is called the \emph{M\"obius inversion} of $f$ (see~\cite[Proposition 2]{Rot64}).
It can be calculated as $f$ convolved with the M\"obius function $\mu$ as follows:
    \begin{equation}
    \label{eq:inversion}
    g[c,d] = \sum_{[a,b] \leq [c,d]} f[a,b] \cdot \mu \big( [a,b], [c,d] \big).
    \end{equation}
To recover $f$ from $g$, convolve with the zeta function as follows:
\begin{align*}
\sum_{[a,b] \leq [c,d]} g[c,d] &= \sum_{[a,b] \leq [c,d]} g[a,b] \cdot \zeta \big( [a,b], [c,d] \big) \\
&=
\sum_{[a,b] \leq [c,d]} \left ( \sum_{[x,y] \leq [a,b]} f[x,y] \cdot \mu \big([x,y], [a,b] \big) \right ) \cdot \zeta \big( [a,b], [c,d] \big) \\
&= \sum_{[x,y] \leq [c,d]} f[x,y] \cdot  \left( \sum_{[x,y] \leq [a,b] \leq [c,d]}
\mu \big([x,y], [a,b] \big) \cdot \zeta \big( [a,b], [c,d] \big) \right)  \\
& =
\sum_{[x,y] \leq [c,d]} f[a,b] \cdot \delta \big(  [x,y], [c,d] \big) = f[c,d].
\end{align*}
Thus, we interpret the convolution of $f$ with $\mu$ as the derivative of $f$ and the convolution of~$g$ with~$\zeta$ as the integral of $g$.

\subsection{Filtrations}
We now describe the (generalized) persistence diagram of a filtration as the M\"obius inversion
of its birth--death function as defined in \cite{McCPa20-edit-distance}.

Let $P$ be a finite poset with a unique maximal element, denoted $\top$, and a unique minimal element,
denoted $\bot$.
Fix a finite simplicial complex $K$, and let $\Delta K$ be the poset of all subcomplexes of $K$
ordered by inclusion.
	\begin{definition}
	A {\bf $P$-filtration} of $K$ is a monotone function $F : P \to \Delta K$ such that
	$F(\bot) = \emptyset$ and $F(\top) = K$.
	Note that when $P$ is totally ordered, a simplex enters the filtration at a unique element of $P$.
	For general $P$, a simplex may enter at multiple incomparable grades of~$P$.
	\end{definition}

For each dimension $d$, denote by $C_d(K)$ the $\field$-vector space
generated by the set of $d$-simplices in $K$.
For every $a \in P$, let $Z_d F(a) \subseteq C_d(K)$ be the subspace of $d$-cycles in $F(a)$ and
let $B_d F(a) \subseteq C_d(K)$ be the subspace of $d$-boundaries in $F(a)$.

\begin{definition}
The {\bf $d$-th birth--death function} of the $P$-filtration $F$ is the monotone integral function
$ZB_d F : \Int{P} \to \Z$ defined as follows:
	\begin{equation*}
	ZB_d F  [a,b] =
		\begin{cases}
		\dim \big( Z_d F(a) \cap B_d F(b) \big) & \text{ if  $b \neq \top$} \\
		\dim Z_d F(a) & \text{ if $b = \top$.}
		\end{cases}
	\end{equation*}
\end{definition}

The second case of $b = \top$ is necessary because it captures
cycles that live forever.
Without it, the persistence diagram of $F$, as defined below, will ignore all
essential cycles of~$K$.

\begin{definition}
The {\bf $d$-th persistence diagram} of the $P$-filtration $F$ is the M\"obius inversion, denoted
$\Dgm_d F$, of its the birth--death function $ZB_d F$.
\end{definition}

We will suppress the dimension $d$ in $Z_d F$, $B_d F$, $ZB_d F$, and $\Dgm_d F$
when it is of no importance.

\begin{rmk}
The above idea for defining the $d$-th persistence diagram of a $P$-filtration
also works for any $P$-module.
That is, let $M: P \to \vec$ be a functor, i.e., $P$-module.
Then there is a birth-death function
associated to $M$ and its persistence diagram is its M\"obius inversion~\cite{GulenMcCleary}.
Further, the persistence diagram $\Dgm_d F$ of a $P$-filtration $F$, as defined above, is the same, up to the diagonal,
as the persistence diagram of the $P$-module $H_d F$ obtained by applying homology to the filtration~\cite{Patel_Rask}.
\end{rmk}

\subsection{M\"obius Inversion for 1-Filtrations}
We now study the familiar case of $1$-filtrations, which are filtrations
over totally ordered posets.

Let $P_n$ be the totally ordered poset $\{ \bot = 0 < 1 < \cdots < n = \top \}$.
The lemma below follows from Equation (\ref{eq:mo_function})
by elementary calculations.
See \cite[Chapter 3.8]{10.5555/2124415} for a guide on how to compute
M\"obius functions.

\begin{lemma}
\label{lem:mobius}
The M\"obius function $\mu \in \Inc( \Int P_n)$ is particularly nice.
For every non-empty interval $[c,d] \in \Int P_n$,
    \begin{equation*}
    \mu \big( [a,b], [c,d] \big) =
    \begin{cases}
    (-1)^{ i} \cdot (-1)^{j} & \text{if  $\exists i,j \in \{0,1\}$ such that  $[a,b] = [c-i, d-j]$} \\
    0 & \text{otherwise}
    \end{cases}
    \end{equation*}
\end{lemma}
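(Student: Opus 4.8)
The plan is to compute the M\"obius function directly from its inductive definition in \eqref{eq:mo_function}, specializing to the totally ordered poset $P_n$. The key structural fact about $\Int P_n$ is that it is a product of two chains: an interval $[a,b] \in \Int P_n$ is just a pair $(a,b)$ with $a \le b$, and $[a,b] \le [c,d]$ iff $a \le c$ and $b \le d$ coordinatewise. Thus $\Int P_n$ sits inside the product poset $P_n \times P_n$ (restricted to the region $a \le b$), and the general principle I would invoke is that the M\"obius function of a product of posets is the product of the M\"obius functions of the factors. So first I would recall (or cite \cite[Chapter 3.8]{10.5555/2124415}) that for a single chain $P_n$, the M\"obius function is $\mu_{P_n}(x,y) = 1$ if $x = y$, $-1$ if $y = x+1$, and $0$ otherwise. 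Then the product formula immediately yields
\[
\mu\big([a,b],[c,d]\big) = \mu_{P_n}(a,c)\cdot \mu_{P_n}(b,d),
\]
and substituting the chain values $(-1)^i$ for $c = a+i$ and $(-1)^j$ for $d = b+j$ with $i,j \in \{0,1\}$ gives exactly the claimed formula, with everything else vanishing.

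Alternatively, if I wanted a fully self-contained argument that does not cite the product formula, I would prove the statement by downward induction on $[a,b]$ for fixed $[c,d]$, exactly as the recursion in \eqref{eq:mo_function} is organized. The base case $[a,b] = [c,d]$ gives $\mu = 1 = (-1)^0(-1)^0$, matching $i = j = 0$. For the inductive step I would expand the sum $-\sum_{[a,b] < [x,y] \le [c,d]} \mu\big([x,y],[c,d]\big)$ and use the induction hypothesis that the only nonzero terms $\mu\big([x,y],[c,d]\big)$ occur at the four points $(x,y) = (c-i,d-j)$ with $i,j \in \{0,1\}$, carrying signs $(-1)^i(-1)^j$. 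The telescoping then reduces to a two-by-two inclusion--exclusion: depending on how $[a,b]$ sits relative to $[c,d]$, the relevant corners of this little grid either cancel in pairs or leave a single surviving $\pm 1$.

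The step I expect to require the most care is the bookkeeping of the recursion near the boundary of the region $a \le b$, that is, handling the interaction with the constraint that $[x,y]$ must be a genuine interval (so $x \le y$) rather than an arbitrary point of $P_n \times P_n$. One must check that the claimed four-point support of $\mu\big(\cdot,[c,d]\big)$ is consistent with the recursion even when some of the candidate grades $c-i$ or $d-j$ fall outside the allowed range or force $x > y$; the hypothesis that $[c,d]$ is non-empty is presumably what keeps this clean. I would verify that in every such boundary case the degenerate corners contribute zero, so that the inclusion--exclusion cancellation goes through unchanged and the product formula is recovered. Since the excerpt explicitly flags this as following ``by elementary calculations,'' I expect the remaining work to be genuinely routine once the product structure is identified.
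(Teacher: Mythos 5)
Your key tool --- the M\"obius function of a product of posets is the product of the factors' M\"obius functions, combined with the chain M\"obius function --- is exactly the ``elementary calculation'' the paper points to with its citation of Stanley, Chapter 3.8, so the instinct is right. But your reading of $\Int P_n$ as the sub-poset $\{(a,b) : a \le b\}$ of $P_n \times P_n$ introduces a genuine error, and the patch you propose --- that ``in every such boundary case the degenerate corners contribute zero, so that the inclusion--exclusion cancellation goes through unchanged'' --- is false. Concretely, take a diagonal interval $[c,d] = [c,c]$, which is non-empty and therefore covered by the lemma. In your restricted poset the pair $[c,c-1]$ is not an element, so the elements between $[c-1,c-1]$ and $[c,c]$ form the three-element chain $[c-1,c-1] < [c-1,c] < [c,c]$, and the recursion \eqref{eq:mo_function} gives $\mu\big([c-1,c],[c,c]\big) = -1$ and then
\[
\mu\big([c-1,c-1],[c,c]\big) \;=\; -\Big( \mu\big([c-1,c],[c,c]\big) + \mu\big([c,c],[c,c]\big) \Big) \;=\; -\big((-1)+1\big) \;=\; 0,
\]
whereas the lemma asserts the value $(-1)^1 \cdot (-1)^1 = +1$. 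So under your interpretation the statement you are trying to prove is simply false on diagonal intervals: the missing corner changes the answer rather than contributing zero, and neither your product-formula argument nor your fallback induction can close this case.

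The resolution is that $\Int P_n$, as the paper defines and uses it, contains empty intervals as genuine elements: an interval is indexed by an \emph{arbitrary} pair of grades (``for every pair of grades $a,b \in P$''), and the hypothesis ``non-empty interval $[c,d]$'' in the lemma only makes sense because elements such as $[c,c-1]$ belong to the poset. With that reading, $\Int P_n$ is isomorphic to the full product $P_n \times P_n$, there are no boundary cases to check at all, and your first argument proves the lemma verbatim: $\mu\big([a,b],[c,d]\big) = \mu_{P_n}(a,c) \cdot \mu_{P_n}(b,d)$, which equals $(-1)^i (-1)^j$ exactly when $[a,b] = [c-i,d-j]$ with $i,j \in \{0,1\}$, and $0$ otherwise. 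This reading is also forced by \cref{cor:compact-1D-formula}: the four-corner inclusion--exclusion there, including the corner $[a,b-1]$ when $a = b$, is the inversion over the full product, not over the sub-poset of non-empty intervals. In short: right tool, wrong poset --- drop the restriction to $a \le b$ instead of arguing around it.
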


The following corollary is an immediate consequence of Equation (\ref{eq:inversion}) and Lemma \ref{lem:mobius}.

\begin{corollary}
\label{cor:compact-1D-formula}
Let $F : P_n \to \Delta K$ be a $1$-filtration and $\Dgm_d F$ its $d$-th persistence diagram.
Then, for any interval $[a,b] \in \Int P_n$,
\begin{align*}
    \Dgm_d F [a,b] & = \sum_{i,j \in \{0,1\}} (-1)^{i} \cdot (-1)^{j} \cdot ZB_d F[a-i, b-j].
    %\\
    %             & = \sum_{i,j \in \{0,1\}} (-1)^{ i} \cdot (-1)^{j}
    %                  \cdot \big( ZB_d F[a-i, b-j] - ZB_d F[a-1, b-1] \big).
\end{align*}
If the interval $[a-i, b-j]$ does not exist, then we interpret $ZB_d F[a-i, b-j]$ as zero.
\end{corollary}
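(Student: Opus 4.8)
The plan is to substitute the closed form for the M\"obius function supplied by \cref{lem:mobius} directly into the inversion formula \eqref{eq:inversion}; once this is done, the statement reduces to bookkeeping. Fix a non-empty target interval, which I take to be $[a,b]$ so as to match the statement, and abbreviate $f = ZB_d F$ and $g = \Dgm_d F$. Renaming the summation variable of \eqref{eq:inversion} to $[x,y]$ (to avoid a clash with the target), the inversion formula reads $g[a,b] = \sum_{[x,y] \leq [a,b]} f[x,y] \cdot \mu\big([x,y],[a,b]\big)$.

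Next I would invoke \cref{lem:mobius}, letting $[a,b]$ play the role of the target interval $[c,d]$ appearing there. The lemma tells us that $\mu\big([x,y],[a,b]\big)$ vanishes unless $[x,y] = [a-i,b-j]$ for some $i,j \in \{0,1\}$, in which case its value is $(-1)^i (-1)^j$. Consequently every term of the sum drops out except the (at most) four coming from $[x,y] \in \{[a-i,b-j] : i,j \in \{0,1\}\}$, and the formula collapses to $g[a,b] = \sum_{i,j \in \{0,1\}} (-1)^i (-1)^j \, f[a-i,b-j]$, which is exactly the claimed identity.

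The only point that needs care, and where I would spend the small remaining effort, is the boundary behavior. Some of the four candidate intervals $[a-i,b-j]$ need not belong to $\Int P_n$: the index $a-i$ may be negative, or the pair may be degenerate (for instance $[a,b-1]$ is not an interval when $a=b$). Such $[x,y]$ never appear as summation indices in the genuine sum \eqref{eq:inversion}, so they contribute nothing there. Adopting the stated convention that $ZB_d F$ of a non-existent interval equals zero, I would observe that adjoining these vanishing terms leaves the value of the sum unchanged, which is what permits writing it uniformly over all $i,j \in \{0,1\}$ as in the statement. I expect this indexing check --- confirming that the terms added under the convention are precisely those absent from \eqref{eq:inversion}, with no omission or double counting --- to be the main (indeed the only) obstacle, though it carries no genuine mathematical difficulty.
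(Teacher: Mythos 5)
Your proposal is correct and follows exactly the route the paper intends: the paper offers no separate argument, declaring the corollary ``an immediate consequence of Equation~(\ref{eq:inversion}) and Lemma~\ref{lem:mobius},'' which is precisely your substitution of the closed-form M\"obius function into the inversion formula. Your extra care with the boundary convention (discarded indices $[a-i,b-j]$ outside $\Int P_n$ contributing zero) is the right and only bookkeeping point, handled correctly.
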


\subsection{M\"obius Inversion for $2$-Filtrations}
We now consider $2$-filtrations, which are filtrations over the product of
two totally ordered posets and the main object of study in this paper.

Let $L_n := P_n \times P_n$ be the product poset defined as follows.
Its grades are tuples $a = (a_1, a_2)$, and the partial ordering is
$a = (a_1, a_2) \leq b = (b_1, b_2)$ whenever $a_1 \leq b_1$ and $a_2 \leq b_2$.
The minimal element is $\bot = (0,0)$ and the maximal element is $\top = (n,n)$.
The lemma below follows from Equation~(\ref{eq:mo_function})
by elementary calculations.
See \cite[Chapter 3.8]{10.5555/2124415} for a guide on how to compute
M\"obius functions.

\begin{lemma}
\label{lem:bimobius}
The M\"obius function $\mu \in \Inc(\Int L_n)$ is particularly nice.
For every non-empty interval $[c,d] \in \Int L_n$,
    \begin{equation*}
    \mu \big( [a,b], [c,d] \big) =
    \begin{cases}
    (-1)^{\# i} \cdot (-1)^{\#j} & \text{if } \exists i,j \in \{0,1\}^2 : [a,b] = [c-i, d-j] \\
    0 & \text{otherwise},
    \end{cases}
    \end{equation*}
   where $\# i := (i_1 + i_2)\! \mod 2$ and similarly for  $\# j$.
\end{lemma}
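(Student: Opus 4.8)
The plan is to reduce the computation to the one‑parameter case already settled in \cref{lem:mobius} by exploiting the fact that $\Int L_n$ splits as a product of two copies of $\Int P_n$. Concretely, I would first establish a poset isomorphism
$$
\Phi : \Int L_n \xrightarrow{\ \cong\ } \Int P_n \times \Int P_n,
\qquad
\Phi\big([(a_1,a_2),(b_1,b_2)]\big) = \big([a_1,b_1],\,[a_2,b_2]\big).
$$
This is well defined because $a \le b$ in $L_n$ means exactly $a_1 \le b_1$ and $a_2 \le b_2$, so each coordinate pair is a genuine interval of $P_n$; it is clearly a bijection, its inverse recombining the two intervals coordinatewise. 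The only point to verify is that it is an order isomorphism: unwinding the definitions, $[a,b] \le [c,d]$ in $\Int L_n$ says $a_1\le c_1,\ a_2\le c_2,\ b_1 \le d_1,\ b_2 \le d_2$, which is precisely the conjunction of $[a_1,b_1]\le[c_1,d_1]$ and $[a_2,b_2]\le[c_2,d_2]$ defining the product order on the target. Hence both $\Phi$ and $\Phi^{-1}$ are monotone.

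Second, I would invoke the multiplicativity of the M\"obius function over products of locally finite posets, $\mu_{X\times Y}\big((x,y),(x',y')\big) = \mu_X(x,x')\cdot\mu_Y(y,y')$, see \cite[Chapter 3.8]{10.5555/2124415}. (For a self-contained argument one verifies directly that the candidate product function is inverse to $\zeta$ on $X\times Y$, using that an interval $\big[(x,y),(x',y')\big]$ of $X\times Y$ is the product $[x,x']\times[y,y']$, so the convolution sum factors as a product of two sums; uniqueness of inverses in the incidence algebra then yields the formula.) Since the M\"obius function is an isomorphism invariant, transporting along $\Phi$ gives
$$
\mu\big([a,b],[c,d]\big) = \mu_{\Int P_n}\big([a_1,b_1],[c_1,d_1]\big)\cdot \mu_{\Int P_n}\big([a_2,b_2],[c_2,d_2]\big).
$$

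Finally, I would substitute the one‑parameter formula of \cref{lem:mobius} into each factor. That lemma makes the first factor equal to $(-1)^{i_1}(-1)^{j_1}$ exactly when $[a_1,b_1]=[c_1-i_1,d_1-j_1]$ for some $i_1,j_1\in\{0,1\}$, and $0$ otherwise, and likewise for the second factor with exponents $i_2,j_2$. Multiplying, the product is nonzero precisely when $a=c-i$ and $b=d-j$ for $i=(i_1,i_2),\,j=(j_1,j_2)\in\{0,1\}^2$, that is, when $[a,b]=[c-i,d-j]$, in which case it equals $(-1)^{i_1+i_2}(-1)^{j_1+j_2}=(-1)^{\#i}(-1)^{\#j}$, the claimed value. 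I expect the only genuine content to be the order‑isomorphism check and the product formula, both routine; the "hard part" is merely bookkeeping the four indices $i_1,i_2,j_1,j_2$ so that the coordinatewise $\{0,1\}$‑shifts assemble into the vector shifts $i,j\in\{0,1\}^2$ and the sign exponents add modulo two.
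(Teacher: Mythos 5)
Your proposal is correct, and it supplies an argument where the paper supplies essentially none: the paper's entire ``proof'' of \cref{lem:bimobius} is the remark that it ``follows from Equation~(\ref{eq:mo_function}) by elementary calculations,'' with a pointer to \cite[Chapter 3.8]{10.5555/2124415}. Your route---factor the interval poset as $\Int L_n \cong \Int P_n \times \Int P_n$, invoke multiplicativity of the M\"obius function over products, then substitute \cref{lem:mobius} coordinatewise---is the natural formalization of that remark, since the product theorem is exactly what the cited chapter of Stanley provides; but it is structurally different from what the paper nominally suggests, namely a direct computation from the recursion carried out separately (and in parallel) for the 1- and 2-parameter cases. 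What your version buys: it reuses \cref{lem:mobius} instead of recomputing, the sign bookkeeping collapses to $(-1)^{i_1+i_2}=(-1)^{\#i}$, and the identical argument yields the $d$-parameter analogue with no extra work, whereas the direct recursion requires a case analysis over the sixteen candidate shifts. One caveat you should state explicitly: the bijection $\Int L_n \cong \Int P_n \times \Int P_n$ holds only under the convention that $\Int P$ consists of the intervals $[a,b]$ with $a \leq b$, one element per such pair; if the empty interval were admitted as an element of $\Int P$ (as the paper's literal phrase ``the set of all intervals'' might suggest), the two sides are not even equinumerous (for $n=1$ one gets $10 \neq 16$), so the isomorphism would fail. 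Your well-definedness check implicitly adopts the right convention, and with it the remaining steps (convolution factoring over product intervals, uniqueness of the inverse of $\zeta$ in the incidence algebra, isomorphism invariance of $\mu$) are airtight; the conclusion is then exactly as strong as \cref{lem:mobius}, whose boundary conventions near the diagonal your formula inherits verbatim.
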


A \emph{$2$-filtration} is a filtration $F : L_n \to \Delta K$ indexed by the product of
two totally ordered posets.
The following corollary is an immediate consequence of Equation~(\ref{eq:inversion}) and
Lemma~\ref{lem:bimobius}.

\begin{corollary}
\label{cor:compact-formula}
Let $F : L_n \to \Delta K$ be a $2$-filtration and $\Dgm_d F$ its $d$-th persistence diagram.
Then, for any interval $[a,b] \in \Int L_n$,
\begin{align*}
    \Dgm_d F [a,b] & = \sum_{i,j \in \{0,1\}^2} (-1)^{\# i} \cdot (-1)^{\# j} \cdot ZB_d F[a-i, b-j].
    %\\
    %             & = \sum_{i,j \in \{0,1\}^2} (-1)^{\# i} \cdot (-1)^{\# j}
    %                  \cdot \big( ZB_d F[a-i, b-j] - ZB_d F [a-(1,1), b-(1,1) ] \big),
\end{align*}
where $\# i = (i_1 + i_2) \mod 2$ and similarly for  $\# j$.
If the interval $[a-i, b-j]$ does not exist, then we interpret $ZB_d F[a-i, b-j]$ as zero.
\end{corollary}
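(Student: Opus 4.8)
The plan is to substitute the explicit formula for the M\"obius function from \cref{lem:bimobius} directly into the inversion formula \eqref{eq:inversion}. By definition $\Dgm_d F$ is the M\"obius inversion of the birth--death function $ZB_d F$, so for a fixed interval $[a,b] \in \Int L_n$, Equation~\eqref{eq:inversion} reads
\[
\Dgm_d F[a,b] = \sum_{[s,t] \leq [a,b]} ZB_d F[s,t] \cdot \mu\big([s,t],[a,b]\big),
\]
with the sum ranging over all intervals $[s,t] \in \Int L_n$ lying below $[a,b]$. The entire content of the corollary is that this potentially large sum collapses to the sixteen-term expression in the statement.

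First I would apply \cref{lem:bimobius}: the coefficient $\mu([s,t],[a,b])$ vanishes unless $[s,t] = [a-i, b-j]$ for some $i,j \in \{0,1\}^2$, in which case it equals $(-1)^{\#i}(-1)^{\#j}$. Hence only the (at most) sixteen intervals of the form $[a-i,b-j]$ contribute, and the sum over $[s,t] \leq [a,b]$ re-indexes as a sum over pairs $(i,j) \in \{0,1\}^2 \times \{0,1\}^2$. Substituting the nonzero coefficients then yields
\[
\Dgm_d F[a,b] = \sum_{i,j \in \{0,1\}^2} (-1)^{\#i}(-1)^{\#j}\, ZB_d F[a-i, b-j],
\]
which is exactly the claimed formula.

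The only point requiring any care --- and the nearest thing to an obstacle in what is otherwise a direct substitution --- is verifying that the two index sets agree and handling the boundary. On one side, every interval $[a-i,b-j]$ that genuinely lies in $\Int L_n$ satisfies $[a-i,b-j] \leq [a,b]$, since subtracting a vector with entries in $\{0,1\}$ from each endpoint only lowers coordinates; thus each nonzero term of the collapsed sum is a legitimate term of the inversion sum. On the other side, some pairs $(i,j)$ produce an expression $[a-i,b-j]$ that is not a valid interval of $L_n$, either because a coordinate of $a-i$ or $b-j$ falls below $0$ or because $a-i \not\leq b-j$; for precisely these pairs \cref{lem:bimobius} contributes nothing, matching the stated convention that $ZB_d F[a-i,b-j]$ is interpreted as zero when the interval does not exist. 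With this bookkeeping the two indexings coincide term by term, and the identity follows.
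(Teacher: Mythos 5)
Your proof is correct and follows exactly the route the paper intends: the paper states \cref{cor:compact-formula} as an immediate consequence of Equation~(\ref{eq:inversion}) and \cref{lem:bimobius}, which is precisely the substitution you carry out. Your additional bookkeeping about the index sets and the boundary convention is a careful spelling-out of what the paper leaves implicit, not a different argument.
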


\subsection{Rota's Galois Connection Theorem}
Functoriality of the (generalized) persistence diagram, as first observed in \cite{McCPa20-edit-distance},
plays an important role in justifying the seemingly arbitrary choices we make in this paper.
The recent work of G\"ulem and McCleary~\cite{GulenMcCleary} shows that functoriality of the persistence diagram is equivalent to Rota's Galois Connection Theorem.
We now state their version of this theorem for persistence diagrams.

A (monotone) \emph{Galois connection} from a poset $P$ to a poset $Q$, written
$f : P \leftrightarrows Q : g$,
are monotone functions $f$ and $g$ such that for all $a \in P$ and $x \in Q$,
$f(a) \leq x$ iff $a \leq g(x)$.

	\begin{thm}[\cite{GulenMcCleary}]
	\label{thm:galois}
	Let $F : P \to \Delta K$ and $G : Q \to \Delta K$ be two filtrations
	and $f : P \leftrightarrows Q: g$ a Galois connection.
	If $G = F \circ g$, then for every $x \leq y$ in $Q$,
		$$\Dgm_d G [x,y] = \sum_{[a,b] \in (\Int f)^{-1}[x,y]} \Dgm_d F[a,b].$$
	\end{thm}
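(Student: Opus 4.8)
The plan is to avoid manipulating the Möbius function $\mu_Q$ directly and instead exploit the uniqueness of the Möbius inversion recorded in Equation~(\ref{eq:mobius_inversion}). Write $D$ for the function on $\Int Q$ given by the right-hand side, $D[x,y] := \sum_{[a,b] \in (\Int f)^{-1}[x,y]} \Dgm_d F[a,b]$. Both $D$ and $\Dgm_d G$ are integral functions on $\Int Q$, so by uniqueness it suffices to show they have the same integral, i.e.\ the same convolution with $\zeta_Q$. Since $\Dgm_d G$ is by definition the Möbius inversion of $ZB_d G$, its integral is exactly $ZB_d G$. Thus the whole theorem reduces to the identity $\sum_{[x',y'] \le [x,y]} D[x',y'] = ZB_d G[x,y]$ for every $x \le y$ in $Q$.

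First I would compute the integral of $D$. Substituting the definition of $D$ and interchanging the two summations collapses the double sum into a single sum of $\Dgm_d F[a,b]$ over all intervals $[a,b] \in \Int P$ with $\Int f[a,b] \le [x,y]$, i.e.\ with $f(a) \le x$ and $f(b) \le y$. This is where the Galois connection enters: the adjunction $f(a) \le x \iff a \le g(x)$, applied in each coordinate, rewrites the index set as $\{[a,b] : a \le g(x),\, b \le g(y)\}$, which is precisely the down-set of $[g(x), g(y)] = \Int g[x,y]$ in $\Int P$. Hence, by the integral formula Equation~(\ref{eq:mobius_inversion}) applied to $ZB_d F$ and its inversion $\Dgm_d F$, we get $\sum_{[x',y'] \le [x,y]} D[x',y'] = \sum_{[a,b] \le \Int g[x,y]} \Dgm_d F[a,b] = ZB_d F[g(x), g(y)]$. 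So it remains to identify the pulled-back birth--death function, namely $ZB_d F \circ \Int g = ZB_d G$.

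For this last identity I would use the hypothesis $G = F \circ g$ directly. As subcomplexes $G(x) = F(g(x))$, so the chain-level data agree, $Z_d G(x) = Z_d F(g(x))$ and $B_d G(x) = B_d F(g(x))$, for every $x \in Q$. When $y \neq \top_Q$ both $ZB_d G[x,y]$ and $ZB_d F[g(x), g(y)]$ are then the single expression $\dim\big(Z_d F(g(x)) \cap B_d F(g(y))\big)$, and the identity is immediate.

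The main obstacle is the special $b=\top$ case built into the birth--death function, and this is exactly where the structure of the Galois connection (rather than an arbitrary monotone map) is needed. I would argue that the upper adjoint $g$ preserves the maximum: since $\top_Q$ is the greatest element of $Q$ we have $f(\top_P) \le \top_Q$, whence $\top_P \le g(\top_Q)$ by the adjunction, and therefore $g(\top_Q) = \top_P$. Consequently, for $y = \top_Q$ both sides reduce to $\dim Z_d F(g(x)) = \dim Z_d G(x)$, matching the $b = \top$ branch. The point that must be checked with real care is that the two definitions do not silently disagree at an intermediate $y$: one needs $g(y) = \top_P$ to occur only at $y = \top_Q$, i.e.\ that the connection identifies the top elements in both directions ($f(\top_P) = \top_Q$). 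This is where I expect the genuine work (or an explicit standing hypothesis on the connection) to sit, since for a general adjoint $g$ could reach $\top_P$ early, and then the $\top$-conventions of the two diagrams would truly differ. Granting this, $ZB_d F \circ \Int g = ZB_d G$ holds on all of $\Int Q$, the two integrals coincide, and uniqueness of the Möbius inversion finishes the proof.
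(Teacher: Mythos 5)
Your argument is correct in structure, and it is worth noting up front that the paper itself contains no proof of \cref{thm:galois}: the theorem is imported from G\"ulen--McCleary, and the only hint the paper gives (in \cref{sec:bd-rank}) is that ``the heart of the proof'' is the observation that $\Int f : \Int P \leftrightarrows \Int Q : \Int g$ is again a Galois connection. That observation is exactly your central step: applying the adjunction coordinatewise identifies $\{[a,b] : \Int f[a,b] \leq [x,y]\}$ with the down-set of $\Int g[x,y]$, which collapses the zeta-integral of the right-hand side to $ZB_d F[g(x),g(y)]$; uniqueness of the M\"obius inversion, Equation~(\ref{eq:mobius_inversion}), then reduces everything to the identity $ZB_d F \circ \Int g = ZB_d G$. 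So your proof follows the intended route, not a different one.

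The $\top$-issue you flag at the end is not excess caution --- it is a genuine hypothesis that the statement, taken literally with this paper's definition of the birth--death function, cannot do without. Take $P = \{0 < 1\}$, $Q = \{0 < 1 < 2\}$, $g(0)=0$, $g(1)=g(2)=1$, $f(0)=0$, $f(1)=1$ (this is a Galois connection), and let $K$ be the boundary of a triangle with $F(0) = \emptyset$, $F(1) = K$, so that $G = F \circ g$ has $G(1) = G(2) = K$. Then $\Dgm_1 F[1,1] = 1$ and $(\Int f)^{-1}[1,1] = \{[1,1]\}$, yet $\Dgm_1 G[1,1] = 0$, because $ZB_1 G[1,1] = \dim\big(Z_1 K \cap B_1 K\big) = 0$ while $ZB_1 F[g(1),g(1)] = \dim Z_1 K = 1$; the essential class of $G$ instead appears at $[1,2]$. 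So without your condition $f(\top_P) = \top_Q$ (equivalently, $g(y) = \top_P$ only at $y = \top_Q$) the formula is false whenever essential cycles are present, and the discrepancy sits precisely in the special $b = \top$ branch of the birth--death function, as you predicted. Reassuringly, the condition holds in both places the paper uses the theorem: in \cref{cor:paths} the path $p$ is injective and ends at $(n,n)$, and in \cref{cor:coarsening-refining} the inclusion $g : L_n \to L_{n'}$ hits the top of $L_{n'}$ only at $(n,n)$. With that hypothesis made explicit, your proof is complete and correct.
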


This theorem says that if two filtrations are related by a Galois connection, then the persistence diagram of one can be read from the other.
\cref{sec:two_to_one,sec:coarsening-refining,sec:bd-rank} make use of this theorem.

\subsection{From 2-filtration to 1-filtrations}
\label{sec:two_to_one}
A 2-filtration over $L_n$ induces a family of 1-filtrations when restricted to paths in $L_n$.
We now apply Theorem~\ref{thm:galois} to study how their persistence diagrams relate.

	\begin{definition}
	A {\bf step} in $L_n$ is a pair of adjacent grades, i.e., grades that
	differ by 1 in a single position: $(a,b) \to (a+1,b)$, or $(a,b) \to (a,b+1)$.
	A {\bf path} in $L_n$ is a monotone
	sequence of $2n$ steps starting at~$(0,0)$ and ending at~$(n,n)$ .
	In other words, a path is a monotone, injective function $p : P_{2n+1} \to L_n$
	\end{definition}

The path $p$ induces a Galois connection $f : L_n \leftrightarrows P_{2n+1} : p$ as follows.
For any $a \in L_n$, define $f(a)$ as the unique minimal element $x \in P_{2n+1}$
such that $a \leq p(x)$.
Note $f(a) \leq x$ iff $a \leq p(x)$.
The following corollary follows from Theorem \ref{thm:galois}.

\begin{corollary}  \label{cor:paths}
Let $F : L_n \to \Delta K$ be a $2$-filtration and $p : P_{2n+1} \to L_n$ a path.
If $G$ is the $1$-filtration $F \circ p : P_{2n+1} \to \Delta K$,
then for any $x \leq y$ in $P_{2n+1}$,
	$$\Dgm_d G [x,y] = \sum_{[a,b] \in (\Int f)^{-1}[x,y]} \Dgm_d  F[a,b].$$
Here $f$ comes from the Galois connection $f : L_n \leftrightarrows P_{2n+1} : p$
described above.
\end{corollary}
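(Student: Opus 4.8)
The plan is to obtain Corollary~\ref{cor:paths} as a direct instance of Rota's Galois Connection Theorem (\cref{thm:galois}). First I would verify that the pair $f : L_n \leftrightarrows P_{2n+1} : p$ constructed from the path $p$ is genuinely a Galois connection in the sense defined above; that is, I must check that $f$ is monotone and that the adjunction $f(a) \leq x \iff a \leq p(x)$ holds. The defining relation is already built into the construction: $f(a)$ is declared to be the \emph{minimal} $x \in P_{2n+1}$ with $a \leq p(x)$, and the text observes $f(a) \leq x \iff a \leq p(x)$. So the substantive points to confirm are that such a minimal $x$ exists and is well-defined, and that $f$ is monotone.

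For existence and well-definedness, I would use that $p$ is a path, hence $p(\top) = p(n+n) = (n,n) = \top$ in $L_n$, so the set $\{x : a \leq p(x)\}$ is nonempty for every $a$; since $P_{2n+1}$ is totally ordered and finite, this set has a unique minimum, giving $f(a)$. For monotonicity of $f$, suppose $a \leq a'$ in $L_n$. Then $a' \leq p(f(a'))$ implies $a \leq p(f(a'))$ by transitivity, so $f(a') $ lies in the set whose minimum defines $f(a)$, whence $f(a) \leq f(a')$. This establishes that $(f,p)$ is a monotone Galois connection. The adjunction itself follows because $f(a)$ is minimal with $a \le p(f(a))$: if $f(a) \le x$ then $a \le p(f(a)) \le p(x)$ using monotonicity of $p$, and conversely if $a \le p(x)$ then $x$ belongs to the defining set, so $f(a) \le x$ by minimality.

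With the Galois connection in hand, the remaining hypothesis of \cref{thm:galois} is the relation $G = F \circ p$. But this is exactly the definition of $G$ in the statement, namely $G = F \circ p : P_{2n+1} \to \Delta K$. Taking $P = L_n$ and $Q = P_{2n+1}$ in \cref{thm:galois}, the conclusion
\[
\Dgm_d G [x,y] = \sum_{[a,b] \in (\Int f)^{-1}[x,y]} \Dgm_d F[a,b]
\]
holds verbatim for every $x \leq y$ in $P_{2n+1}$, which is precisely the claim of the corollary.

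I expect the only genuine obstacle to be the Galois-connection verification, and within that the monotonicity and minimality argument for $f$; once these are in place the corollary is a formal specialization of the theorem with no further computation. A secondary point worth a sentence is checking that $p$ respects the basepoints so that $f$ is defined on all of $L_n$: since $p(0) = (0,0) = \bot$ and $p(2n) = (n,n) = \top$, both endpoints behave correctly and $F \circ p$ is indeed a $1$-filtration in the sense of the earlier definition (it sends $\bot$ to $\emptyset$ and $\top$ to $K$), so the hypotheses of \cref{thm:galois} are fully met.
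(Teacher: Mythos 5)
Your proposal is correct and matches the paper's approach exactly: the paper also defines $f(a)$ as the minimal $x$ with $a \leq p(x)$, notes the adjunction $f(a) \leq x \iff a \leq p(x)$, and obtains the corollary as a direct specialization of \cref{thm:galois} with $G = F \circ p$. Your additional verification of well-definedness and monotonicity of $f$ simply fills in details the paper leaves implicit.
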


In other words, it is possible to read off the ordinary persistence diagram for
any 1-filtration defined by a path through the poset $L_n$ from the generalized
persistence diagram of the 2-filtration.
In this paper, we develop an algorithm that does the opposite.
We glue the persistence diagrams for a family of paths into
the persistence diagram of the 2-filtration.

\subsection{Coarsening and refining}
\label{sec:coarsening-refining}
We now apply Theorem~\ref{thm:galois} to study how the persistence
diagram of a filtration changes under refinement.

Refine $P_n$ by dividing every step into multiple steps:
$$P_{n'} := \left \{ 0 < \ldots < 1 < \cdots < n-1 < \ldots < n \right \}.$$
The product $L_{n'} := P_{n'} \times P_{n'}$ is a refinement of $L_n$, and the
two are related  by a Galois connection $f : L_{n'} \leftrightarrows L_n : g$
as follows.
For any $(a, b) \in L_{n'}$, $f(a, b) := \big( \lceil a \rceil, \lceil b \rceil \big)$, and $g$ is inclusion.
The following corollary follows from Theorem \ref{thm:galois}.

\begin{corollary}
\label{cor:coarsening-refining}
Let $F : L_{n'} \to \Delta K$ and $G : L_n \to \Delta K$ be filtrations and $f : L_{n'} \leftrightarrows L_n : g$
the Galois connection described above.
If $G = F \circ g$, then for any $x \leq y$ in $G$,
	$$\Dgm_d G [x,y] = \sum_{[a,b] \in (\Int f)^{-1}[x,y]} \Dgm_d  F[a,b].$$
\end{corollary}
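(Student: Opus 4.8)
The plan is to recognize this corollary as a direct instance of Rota's Galois Connection Theorem (\cref{thm:galois}), specialized to $P = L_{n'}$ and $Q = L_n$. Once the stated pair $(f,g)$ is confirmed to be a monotone Galois connection, the displayed identity is \emph{verbatim} the conclusion of \cref{thm:galois}, invoked with the hypothesis $G = F \circ g$. So the entire content of the proof is the verification that $f : L_{n'} \leftrightarrows L_n : g$ satisfies the defining conditions of a Galois connection.

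First I would check monotonicity of both maps. The ceiling map $\lceil \cdot \rceil : P_{n'} \to P_n$ sends each refined grade to the least integer above it, which is order-preserving; applying it coordinatewise shows $f(a,b) = (\lceil a \rceil, \lceil b \rceil)$ is monotone, and the inclusion $g : L_n \hookrightarrow L_{n'}$ is monotone by construction. I would also note in passing that $G = F \circ g$ is automatically a legitimate filtration: $g$ fixes $\bot = (0,0)$ and $\top = (n,n)$, so $G(\bot) = \emptyset$ and $G(\top) = K$, and $G$ is monotone as a composite of monotone maps.

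Next I would verify the adjunction $f(a,b) \leq (x,y) \iff (a,b) \leq g(x,y)$ for all $(a,b) \in L_{n'}$ and $(x,y) \in L_n$. Because both partial orders are componentwise and $g$ is inclusion, this decouples into the single-coordinate claim $\lceil a \rceil \leq x \iff a \leq x$ for an integer $x$. This is the standard ceiling adjunction: if $\lceil a \rceil \leq x$ then $a \leq \lceil a \rceil \leq x$; conversely, if $a \leq x$ with $x$ an integer, then $x$ is an integer lying above $a$, so $\lceil a \rceil \leq x$ by the minimality defining the ceiling. Conjoining the two coordinates yields the product adjunction, establishing that $(f,g)$ is a Galois connection.

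Finally, with $(f,g)$ a Galois connection and $G = F \circ g$, \cref{thm:galois} applies directly and delivers $\Dgm_d G[x,y] = \sum_{[a,b] \in (\Int f)^{-1}[x,y]} \Dgm_d F[a,b]$, as claimed. There is essentially no obstacle here; the only point meriting care is keeping the direction of the ceiling/inclusion adjunction correct and observing that the coordinatewise product of two Galois connections is again a Galois connection, so that the one-dimensional ceiling adjunction lifts cleanly to $L_{n'} \leftrightarrows L_n$.
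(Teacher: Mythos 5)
Your proposal is correct and matches the paper's approach exactly: the paper proves this corollary simply by asserting the ceiling/inclusion pair $f : L_{n'} \leftrightarrows L_n : g$ is a Galois connection and invoking \cref{thm:galois} with $G = F \circ g$. Your verification of the coordinatewise ceiling adjunction just fills in the routine check the paper leaves implicit.
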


This corollary highlights a significant advantage of generalized persistence
diagrams: they are stable under refinement. Lack of this property is one of the
chief problems with using indecomposable summands to generalize persistence
diagrams (breaking ties in the input can significantly change the
decomposition).
Our definition of a 2-filtration allows two simplices to enter at the same grade.
However, our algorithm does not.
In this case, Corollary~\ref{cor:coarsening-refining} enables us to break ties arbitrarily;
see Section~\ref{sec:preliminaries}.

\subsection{Birth--death vs rank functions}
\label{sec:bd-rank}
We now explain our preference for the birth--death function over the better-known rank function
associated to a filtration.

Consider the classical definition of the persistence diagram associated
to a $1$-filtration $G : P_n \to \Delta K$.
The \emph{$d$-th rank function} associated to $G$
is the function $\rank_d : \Int P_n \to \Z$ that assigns to every interval $[a,b]$
the rank of the linear map on homology
$H_d\big( G(a)  \big) \to H_d \big( G(b) \big)$
induced by the inclusion $G(a) \hookrightarrow G(b)$.
The $d$-th persistence diagram of $F$ is defined as the assignment to every
$a < b$  (up to a shift), the integer
$$\rank_d [a,b] - \rank_d [a-1, b] + \rank_d [a-1,b+1] - \rank_d [a, b+1].$$
This assignment can be interpreted as the M\"obius inversion of $\rank_d F$ using the
containment~$\supseteq$ relation on $\Int P_n$ instead of the product ordering we are using
in this paper.
That is, $[c,d] \supseteq [a,b]$ if $c \leq a$ and $b \leq d$.

We prefer the birth--death function over the rank function because it behaves well with Galois connections.
Consider a Galois connection between arbitrary posets $f : P \leftrightarrows Q : g$ and
apply $\Int$. Assuming the product ordering on the two interval posets $\Int P$ and $\Int Q$,
we see
	\begin{align*}
	\Int f [a,b] \leq [x,y] \text{ iff } [a,b] \leq \Int g [x,y]
	\Leftrightarrow f(a) \leq x \text { and } f(b) \leq y
	\text{ iff } a \leq g(x) \text{ and } b \leq g(y).
	\end{align*}
In other words, $\Int f : \Int P \leftrightarrows \Int Q : \Int g$ is a Galois connection.
This observation is at the heart of the proof of Theorem \ref{thm:galois}.
Unfortunately, $\Int f : \Int P \leftrightarrows \Int Q : \Int g$ fails to be a Galois connection
under the containment ordering on $\Int  P$ and $\Int Q$.
This makes Theorem~\ref{thm:galois} unavailable when using the rank function.
We need Corollaries \ref{cor:paths} and \ref{cor:coarsening-refining} to justify the choices we make later in this paper.

%
%The rank function is encoded in the birth--death function as follows.
%For any interval $[a,b] \in \Int L_n$,
%	\begin{align*}
%	\rank_d G [a,b] &= \dim \left ( \frac{Z_d G(a)}{Z_d G(a) \cap B_d G(b)} \right)
%	= \dim Z_d G(a) - \dim \big( Z_d G(a) \cap B_d G(b) \big) \\
%	&= ZB_d G \big[ a, (n,n) \big] -  ZB_d G [a,b].
%	\end{align*}

We prefer the birth--death function both for its
theoretical properties and for the simpler interpretation of its M\"obius
inversion\footnote{
    Suppose $F$ is a 1-filtration, with $i$ the minimal grade of simplex
    $\ssx_i$. Define a 2-filtration $G$ so that the unique minimal grade of $\ssx_i$ is
    $(i,i)$. The M\"obius inversion of the birth--death function of $G$ is
    $\Dgm G[(i,i),(j,j)] = +1$ if $\Dgm F[i,j] = +1$ and $0$ otherwise.
    The M\"obius inversion of the rank function on the other hand has values,
      $\Dgm^{\rank} G[(i,i),(j,n)] = +1$,
      $\Dgm^{\rank} G[(i,i),(n,j)] = +1$,
      $\Dgm^{\rank} G[(i,i),(j,j)] = -1$
    for every $\Dgm F[i,j] = +1$.}.
But we also note that because the two produce the same persistence diagram for the
1-filtrations, the algorithm described in this paper can be immediately adapted to
computing the M\"obius inversion of the rank function on the 2-filtration; only
the output values have to be changed.

\subsection{Transpositions}
Fix a $1$-filtration $F : P_n \to \Delta K$ and
%$\emptyset = K_0 \subseteq K_1 \subseteq K_2 \subseteq \ldots \subseteq K_n = K$,
assume that for every adjacent pair of subcomplexes, the difference
$F(i) - F(i-1)$ is empty or a single simplex, denoted $\ssx_i$.
Given the boundary matrix $D$ of $K$, with rows and
columns ordered by the $1$-filtration, the standard persistence
algorithm~\cite{ELZ02} finds a
factorization~\cite{vineyards}, $R = DV$, where $R$ is \emph{reduced}, meaning
the lowest non-zero entries in its columns appear in unique rows, and $V$ is
invertible upper-triangular. The lowest non-zeros in matrix $R$ give the
persistence pairing: for $j \neq n$, $\Dgm [i,j] = 1$ iff $R[i,j] \neq 0$ and
$R[i',j] = 0 ~\forall~ i' > i$; for $j = n$, $\Dgm [i,n]$ is the number of
zero columns $i$ (such that $R[\cdot, i]$ is $0$) minus the number of non-zero columns $j$
(such that $\exists i, R[i,j] \neq 0$).

\begin{definition}
Whenever $\Dgm [i,j] \neq 0$, we say $\sigma_i$ is {\bf paired} with $\sigma_j$
and that $\sigma_i$ is {\bf positive} and $\sigma_j$ is {\bf negative}.
Two pairs $(\sigma_i, \sigma_j)$ and $(\tau_k, \tau_l)$ are {\bf nested}
if the interval $[i,j]$ is contained in the interval $[k,l]$.
They are {\bf disjoint} if $[i,j] \cap [k,l] = \emptyset$.
\end{definition}

Cohen-Steiner et al.~\cite{vineyards} (see also \cite{morozov-phd}) study what happens to the pairing when we
transpose two simplices in the 1-filtration $F$. They analyze how the decomposition
$R = DV$ may fail to satisfy the requirement that $R$ is reduced and $V$ is
invertible upper-triangular, and show that this property can be restored,
following a single transposition, in linear time.
\cref{sec:transposition-analysis} briefly recaps the details of the updates.
The following lemma is a consequence of their analysis.
\begin{lemma}[Nested-Disjoint Lemma~\cite{vineyards,morozov-phd}]
    \label{lem:nested-disjoint}
    The pairing of two transposing simplices can switch only if before the
    transposition their pairing is either nested, or disjoint. (If the switch
    occurs, the pairing remains nested or disjoint after the transposition.)
\end{lemma}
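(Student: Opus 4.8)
The plan is to analyze the effect of a single transposition on the $R = DV$ decomposition, following the column-operation case analysis of Cohen-Steiner et al. Suppose we transpose adjacent simplices $\sigma_i$ and $\sigma_{i+1}$. Swapping rows and columns $i$ and $i+1$ of $D$ (equivalently, conjugating by the permutation matrix $S$ that exchanges these two indices) produces $\tilde D = SDS$, and correspondingly $\tilde R = S R S$, $\tilde V = S V S$. The resulting $\tilde V$ may fail to be upper-triangular (a single off-diagonal $1$ can appear at position $(i, i+1)$ or $(i+1, i)$), and $\tilde R$ may fail to be reduced (two columns may share a lowest nonzero in the same row). First I would set up the four coarse cases by the signs of $\sigma_i, \sigma_{i+1}$: positive-positive, negative-negative, positive-negative, negative-positive; these are further refined by whether the lowest-one rows collide and by the entry $V[i,i+1]$.

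The key observation to extract is combinatorial: a pairing \emph{switch} — meaning the partner of $\sigma_i$ and the partner of $\sigma_{i+1}$ genuinely exchange, rather than the pairing being restored unchanged — occurs in exactly one of these sub-cases, and it requires a nontrivial reduction step (adding one column to another to restore the reduced form of $\tilde R$). The heart of the argument is to show that this reduction is \emph{possible} precisely when the two pairs are nested or disjoint before the transposition, and to rule out the remaining configuration. The configurations of two intervals $[i', j']$ and $[k', l']$ on a line are: nested, disjoint, or \emph{crossing} (where $i' < k' \le j' < l'$, say). I would show that when the pairs cross, the transposition of the two adjacent simplices cannot induce a switch: either the transposed simplices are not paired with each other in a way that the lowest-ones collide, or the column addition that would be required to reduce $\tilde R$ would violate upper-triangularity of $\tilde V$, so no switch is performed and the pairing is restored as it was.

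Concretely, the main step is to track the lowest nonzero entries. A switch can only arise when transposing $\sigma_i$ and $\sigma_{i+1}$ forces a collision of lowest-ones between their respective paired columns, and resolving that collision by a single column operation reassigns the lows. I would verify that after restoring reducedness, the new intervals are again nested or disjoint (the second sentence of the lemma), which follows from the same bookkeeping: the column operation that fixes the collision changes which row is lowest in exactly one column, and the arithmetic of the indices shows the resulting pair of intervals avoids the crossing configuration. This gives both directions: switches happen only from nested/disjoint, and preserve the nested-or-disjoint property.

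The hard part, and the place I expect to spend the most care, is the exhaustive sign-and-collision case analysis that certifies the crossing configuration cannot produce a switch. This is where one must be precise about which of $\tilde R$'s columns acquires a new lowest-one after the reduction, and confirm that in the crossing case either no reduction is triggered or the triggered reduction leaves the pairing of $\sigma_i, \sigma_{i+1}$ with each other fixed. Since the per-transposition update analysis is exactly what \cite{vineyards} carries out, I would lean on that machinery (recapped in \cref{sec:transposition-analysis}) and present the lemma as the combinatorial corollary that isolates the nested/disjoint precondition from their explicit update rules, rather than re-deriving all the matrix operations from scratch.
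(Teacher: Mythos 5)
Your proposal takes essentially the same route as the paper: the paper does not prove this lemma from scratch but presents it as a direct consequence of the per-transposition $R=DV$ update analysis of Cohen-Steiner et al.~\cite{vineyards,morozov-phd}, recapped in \cref{sec:transposition-analysis}, which is exactly the sign-and-collision case analysis you outline (and you explicitly say you would lean on that machinery rather than re-derive it). Your central claims --- that a switch requires a lowest-one collision resolved by the legal earlier-into-later column operation, and that in the crossing configuration this forced operation leaves the pairing unchanged, while in the nested (or disjoint) configuration it can exchange the partners and yields pairs that are again nested or disjoint --- are the same points that analysis establishes.
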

The contrapositive of this statement is an important shortcut that we use below:
if the pairing of two transposing simplices is neither nested, nor disjoint, it
will not change after the transposition.

\section{Preliminaries}
\label{sec:preliminaries}

We now develop notation specializing to $2$-filtrations and state an important
lemma.

An \emph{antichain} in $L_n$ is any subset such that no two grades are comparable.
An \emph{upset} is a subset $U \subseteq L_n$ such that if $a \in U$ and $a \leq b$, then $b \in U$.
Since $L_n$ is finite, its set of antichains is in bijection with its set of upsets.
Fix an upset $U$.
A \emph{lower corner} of $U$ is any minimal grade of $U$.
Note that the set of all lower corners of $U$ is an antichain;
we denote it $\lc(U)$.
We say $(i,j) \in U$ is an \emph{upper corner} of $U$
if $(i-1, j)$ and $(i,j-1)$ are in $U$ but $(i-1, j-1)$ is not in $U$.
We denote the set of all upper corners as $\uc(U)$.
A \emph{periphery} of $U$ is the set of grades $(i,j)$ such that grades
$(i-1,j-1)$ are not in the upset. See \cref{fig:upset}.
(The periphery uniquely determines the upset, so we abuse the terminology and
also refer to lower and upper corners of a periphery.)

\begin{figure}
    \centering
    \includegraphics{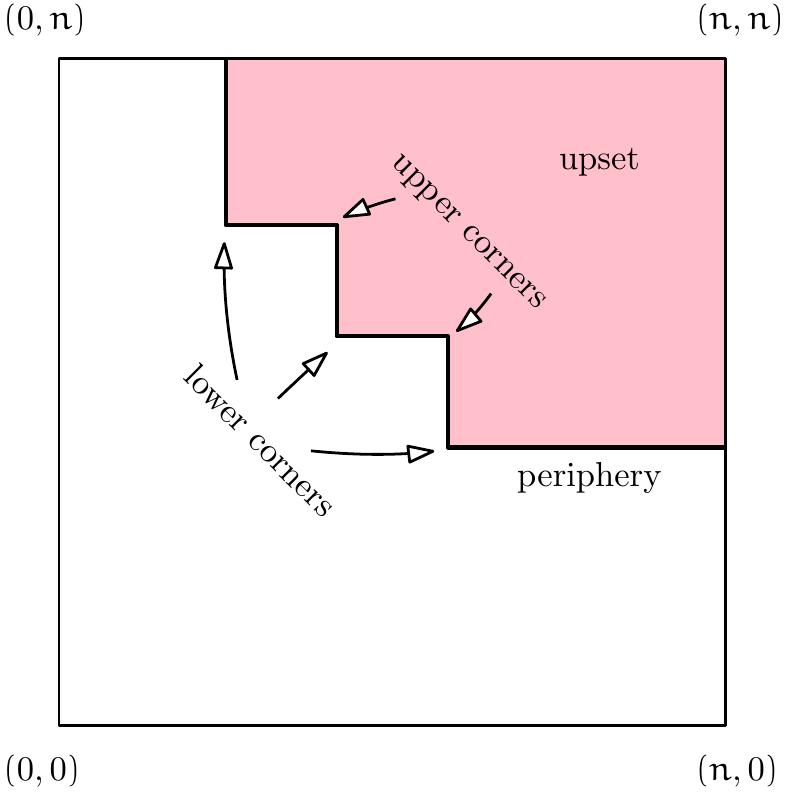}
    \caption{An upset in pink, its lower and upper corners, and periphery in bold.}
    \label{fig:upset}
\end{figure}

Fix a 2-filtration $F : L_n \to \Delta K$.
For any simplex $\sigma \in K$, let  $f(\ssx)$ denote the
periphery of the upset $U_\ssx$ that contains $\ssx$.
We call this the \emph{appearance curve} of $\ssx$.
Note that a path in $L_n$ enters upset $U_\ssx$
at a unique grade of the appearance curve.

\begin{definition}
The 2-filtration $F$ is {\bf  non-degenerate} if no two lower corners of any
appearance curves share a component of a grade.
That is, if $(a_1, a_2)$ is a lower corner of $f(\ssx)$ and
$(b_1, b_2)$ is a lower corner of $f(\tsx)$, then $a_1 \neq b_1$ and $a_2 \neq b_2$.
\end{definition}

Every degenerate filtration $F$ can be refined into a non-degenerate filtration $G$
as follows.
Consider a refinement $L_{n'}$ of $L_n$ and recall the Galois connection
$f : L_{n'} \leftrightarrows L_n : g$ from \cref{sec:coarsening-refining}.
Assuming $L_{n'}$ is sufficiently fine, there is a non-degenerate filtration $G : L_{n'} \to \Delta K$
such that $G = F \circ g$, with ties in $F$ broken arbitrarily, respecting
the dimension of the simplices.
\cref{cor:coarsening-refining} provides the equation for recovering $\Dgm F$ from $\Dgm G$.

Throughout the paper, we use $m = |K|$ to denote the size of the simplicial
complex, and $n = \sum_\ssx |\lc(f(\ssx))|$, the complexity of the input, i.e., the
size of the description of the minimal grades at which the simplices appear.

\begin{lemma}[Path Invariance]
    \label{lem:path-invariance}
    Let $F : L_n \to \Delta K$ be a non-degenerate filtration and $p : P_{2n+1} \to L_n$
    any path.
    If simplex $\ssx$ is added to the 1-filtration $F \circ p$ at step $(i,j) \to
    (i+\delta_i,j + \delta_j)$, simplex $\tsx$ is added to the 1-filtration at
    step $(k,l) \to (k + \delta_k, l + \delta_l)$, and $\ssx$ and $\tsx$ are
    paired in the 1-filtration, then they are paired in every 1-filtration induced by
   any path taking these two steps.
\end{lemma}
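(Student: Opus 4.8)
The plan is to fix the two prescribed edges of the grid, connect any two admissible paths by a sequence of elementary local moves that leave those edges untouched, and then show that each such move preserves whether $\ssx$ and $\tsx$ are paired. Write $s_\ssx$ for the step $(i,j)\to(i+\delta_i,j+\delta_j)$ and $s_\tsx$ for $(k,l)\to(k+\delta_k,l+\delta_l)$; these are two specific edges of $L_n$. The observation I would start from is that whether a given simplex \emph{first} enters the induced $1$-filtration on a particular edge $e$ depends only on $F$ and $e$, not on the rest of the path: $\ssx$ enters on the horizontal edge $(a,b)\to(a+1,b)$ precisely when $\ssx\in F(a+1,b)\setminus F(a,b)$, and analogously for vertical edges, since the subcomplex present at the start of $e$ is exactly $F(\text{start of }e)$. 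Consequently every path containing $s_\ssx$ adds $\ssx$ exactly there, and likewise for $\tsx$ and $s_\tsx$; in particular $s_\ssx\neq s_\tsx$, because non-degeneracy forces at most one simplex to enter per step (two simplices entering on a common horizontal step would yield two lower corners sharing a first coordinate).

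Next I would set up the flip-connectivity. Two adjacent monotone lattice paths differ by a single square flip, replacing the bottom-and-right edges of a unit square by its left-and-top edges; using non-degeneracy this corresponds to transposing the two adjacent simplices $\alpha,\beta$ entering on the horizontal resp.\ vertical edges of that square (or is a trivial move when the square carries fewer simplices) — exactly the elementary update analyzed in \cite{vineyards} and recapped in \cref{sec:transposition-analysis}. The two fixed edges $s_\ssx,s_\tsx$ cut every admissible path into free monotone sub-paths between fixed anchor grades. Any two admissible paths share these anchors and differ only inside the free sub-paths, and monotone lattice paths between two fixed grid points are connected by flips performed at unit squares strictly inside the spanned rectangle. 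Hence I can connect the two paths by flips none of whose squares has $s_\ssx$ or $s_\tsx$ as an edge, and by the path-independence above every intermediate path still adds $\ssx$ at $s_\ssx$ and $\tsx$ at $s_\tsx$; so no flip in the sequence transposes $\ssx$ or $\tsx$.

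It then remains to show that a transposition of adjacent simplices $\alpha,\beta$ with $\{\alpha,\beta\}\cap\{\ssx,\tsx\}=\emptyset$ never changes whether $\ssx$ and $\tsx$ are paired. Here I would invoke the vineyard analysis behind \cref{lem:nested-disjoint}: a transposition alters the pairing only among $\alpha$, $\beta$ and their partners $\alpha',\beta'$, turning $(\alpha,\alpha'),(\beta,\beta')$ into $(\alpha,\beta'),(\beta,\alpha')$ in the event of a switch and leaving every other pair intact. Since $\ssx,\tsx\notin\{\alpha,\beta\}$, the pair $\{\ssx,\tsx\}$ can be touched only if $\{\ssx,\tsx\}=\{\alpha',\beta'\}$; but $\alpha'$ and $\beta'$ are paired with each other neither before nor after the switch, so $\ssx$ and $\tsx$ are unpaired in both filtrations in that case, and in every other case their pairing is literally unchanged. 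Thus each flip preserves the pairing status of $\{\ssx,\tsx\}$, and induction along the connecting sequence yields the claim.

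The step I expect to be the main obstacle is the middle one: making precise that the admissible paths form a single flip-connected class relative to the two distinguished edges $s_\ssx$ and $s_\tsx$ (so these edges, and hence the entering steps of $\ssx$ and $\tsx$, can be held fixed throughout), together with the reduction of a square flip to a single adjacent transposition in the non-degenerate setting. The transposition bookkeeping of the last paragraph is routine once the local fact ``only $\alpha,\beta$ and their partners can move'' is in hand, and the path-independence observation of the first paragraph is what glues these two ingredients together.
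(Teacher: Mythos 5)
Your proof is correct, but it takes a genuinely different, and considerably heavier, route than the paper's. The paper's proof is a short algebraic argument: with $K_1 = F(i,j)$, $K_2 = K_1\cup\{\ssx\}$, $K_3 = F(k,l)$, $K_4 = K_3\cup\{\tsx\}$, every path taking the two steps induces a 1-filtration containing $K_1\subseteq K_2\subseteq K_3\subseteq K_4$ as a subfiltration, and by \cref{cor:compact-1D-formula} the simplices $\ssx$ and $\tsx$ are paired iff $\dim(ZK_2\cap BK_4)-\dim(ZK_1\cap BK_4)-\dim(ZK_2\cap BK_3)+\dim(ZK_1\cap BK_3)=1$; this quantity depends only on the four complexes, hence only on $F$ and the two steps, so every admissible path gives the same answer. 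You instead fix the two entering edges, connect any two admissible paths by square flips confined to the rectangles between the anchor grades (so that no flip touches $s_\ssx$ or $s_\tsx$), and show that each flip preserves the pairing status of $\{\ssx,\tsx\}$ via locality of transposition updates; the step you flagged as the main obstacle, flip-connectivity relative to the two fixed edges, is indeed standard and unproblematic. Comparing the two: the paper's route is self-contained given the M\"obius inversion formula already established, and it yields an explicit, path-free criterion for the pairing; your route is combinatorial, closer in spirit to the traversal algorithm of \cref{sec:algorithm}, and would apply to any notion of pairing that is local under transpositions, but it relies on a fact strictly stronger than \cref{lem:nested-disjoint} as stated in the paper --- namely that a transposition of $\alpha,\beta$ can only affect the pairs of $\alpha$, $\beta$, and their two partners, leaving every other pair intact --- which is true and contained in the vineyard analysis recapped in \cref{sec:transposition-analysis}, but would need to be cited or proved explicitly. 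One minor imprecision to fix: a flip whose square contains two simplices need not transpose them (the same simplex can enter on the bottom edge before the flip and on the top edge after it, in which case the order of simplices is unchanged); this is harmless, since the induced ordering, and hence the pairing, is then identical, but your case analysis should account for it.
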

\begin{proof}
    Let $K_1$ denote the complex $F (i,j)$, $K_2 = K_1 \cup \{ \ssx \}$
    denote the complex $F(i+\delta_i, j + \delta_j)$,~$K_3$ denote the complex $F(k,l)$,
    $K_4 = K_3 \cup \{ \tsx \}$ denote the complex $F(k + \delta_k, l + \delta_l)$.
    %Let $K_5 = K_3 - K_1 \cup \{ \ssx \}$.
    Then the 1-filtration
    induced by any path taking these two steps
    contains the following 2-step filtration as a subfiltration:
    $K_1 \subseteq K_2 \subseteq K_3 \subseteq K_4$.
    From \cref{cor:compact-1D-formula},
    $\ssx$ and $\tsx$ are paired iff
    \[
        \dim (Z K_2 \cap B K_4)
            - \dim (Z K_1 \cap B K_4)
            - \dim (Z K_2 \cap B K_3)
            + \dim (Z K_1 \cap B K_3) = 1.
    \]
    In other words, the pairing is independent of the order of
    simplices in $K_1$ and $K_3 - K_2$.
\end{proof}

\section{M\"obius Inversion from Transpositions}
\label{sec:mi-from-transpositions}

\begin{figure}
\begin{center}
\includegraphics{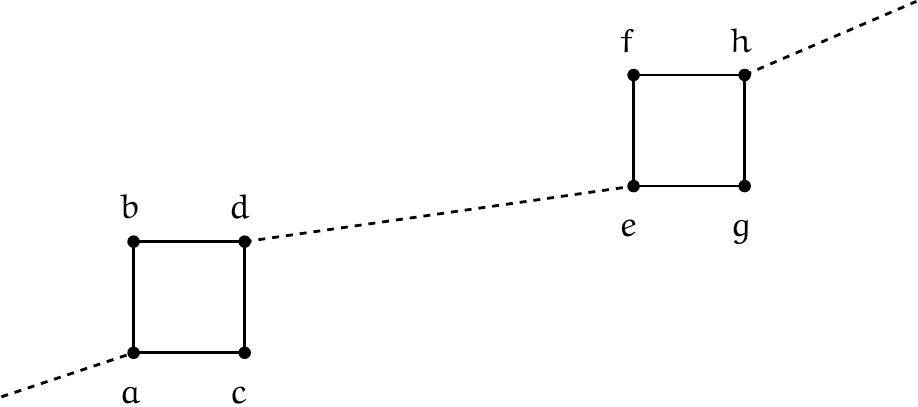}
\caption{1-dimensional paths through the 2-filtration involved in the analysis.}
\label{fig:four-paths-compact}
\end{center}
\end{figure}

Fix a non-degenerate filtration $F: L_n \to \Delta K$.
To compute the persistence diagram $\Dgm F$ at the interval $[d,h]$,
it suffices, by \cref{cor:compact-formula},
to consider any four 1-dimensional paths through the 2-filtration that go through the subcomplexes
$F(d-i)$ and $F(h-j)$ for $i,j \in \{0,1\}^2$. We denote the four grades around $d$ as $a,b,c,d$
and the four grades around $h$ as $e,f,g,h$; see \cref{fig:four-paths-compact}.
The four paths in question are:

\begin{equation}
%\begin{center}
\begin{tabular}{ccc}
    $\ldots abd \ldots efh \ldots$      & $\leftrightarrow$ &     $\ldots abd \ldots egh \ldots$  \\
    $\updownarrow$                      &                   &     $\updownarrow$                 \\
    $\ldots acd \ldots efh \ldots$      & $\leftrightarrow$ &     $\ldots acd \ldots egh \ldots$
\end{tabular}
%\end{center}
\label{eq:four-paths}
\end{equation}

Because the filtration is non-degenerate, at most a single simplex appears at
every one of the eight steps on these paths. Moreover, the only way two
different simplices $\ssx$ and $\tsx$ can appear going from $a$ to $d$ (or from
$e$ to $h$) is if $d$ is a minimum of the intersection of upsets of the
appearance curves of $\ssx$ and $\tsx$.
In this case, going between any two paths connected by an arrow results in a
transposition of the two simplices in the 1-dimensional filtrations induced by
the paths.

\paragraph{2D from 1D.}
For any two nested spaces $F(x) \subseteq F(y)$, let $xy$ denote the value of the birth--death function $ZB F[x,y]$.
For any filtration
\[
    F(w) \subseteq F(x) \subseteq F(y) \subseteq F(z),
\]
let $wxyz$ be the value of the 1-dimensional M\"obius inversion of the birth--death
function for the interval $[x,z]$ in this filtration, namely
\begin{equation}
    wxyz = xz - xy - wz + wy. \label{eq:wxyz}
\end{equation}

With this notation, the 2-dimensional M\"obius inversion can be rewritten in
terms of the 1-dimensional M\"obius inversions:

\begin{lemma}
    \label{lem:dgm-from-paths}
    \begin{align}
        \Dgm F [d,h]
                &= (cdfh - abfh) - (cdeg - abeg)    \label{eq:vh-path}    \\
                &= (cdgh - abgh) - (cdef - abef)    \label{eq:vv-path}    \\
                &= (bdfh - acfh) - (bdeg - aceg)    \label{eq:hh-path}    \\
                &= (bdgh - acgh) - (bdef - acef)    \label{eq:hv-path}    \\
                &= adeh - adef - adeg - abeh - aceh + abef + abeg + acef + aceg.  \label{eq:9-paths}
    \end{align}
\end{lemma}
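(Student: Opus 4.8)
The plan is to reduce all five identities to a single bilinear bookkeeping calculation, exploiting the fact that the four grades around $d$ (resp.\ $h$) come with signs dictated by \cref{cor:compact-formula}. First I would fix the labels so the signs line up: set $a = d-(1,1)$, let $b,c$ be the grades $d-(1,0)$ and $d-(0,1)$ in some order, and keep $d$ itself; likewise $e = h-(1,1)$, with $f,g$ the grades $h-(1,0), h-(0,1)$, and $h$. Then the coefficients $(-1)^{\#i}(-1)^{\#j}$ from \cref{cor:compact-formula} are $+$ on the products $\{a,d\}\times\{e,h\}$ and $\{b,c\}\times\{f,g\}$ and $-$ on the mixed products, so $\Dgm F[d,h]$ is a signed sum of sixteen values $ZB F[x,y]$.

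To handle all five lines uniformly I would introduce a formal bilinear pairing. For integer combinations $\xi=\sum_x s_x\,x$ of lower grades and $\eta=\sum_y t_y\,y$ of upper grades, set $\langle\xi,\eta\rangle := \sum_{x,y} s_x t_y\cdot ZB F[x,y]$, with the convention $ZB F[x,y]=0$ when $x\not\le y$ (so the pairing is defined on every grade that occurs). With this notation \cref{cor:compact-formula} reads exactly $\Dgm F[d,h]=\langle a-b-c+d,\;e-f-g+h\rangle$, while the definition \eqref{eq:wxyz} reads $wxyz=\langle x-w,\;z-y\rangle$, since $xz-xy-wz+wy$ is precisely this pairing expanded in the shorthand $xy=ZB F[x,y]$.

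Each of the five right-hand sides is a $\Z$-linear combination of terms $wxyz$, so by bilinearity it equals $\langle\xi,\eta\rangle$ for a pair $(\xi,\eta)$ that I can read off and simplify. For instance \eqref{eq:vh-path} becomes $\langle (d-c)-(b-a),\,h-f\rangle-\langle (d-c)-(b-a),\,g-e\rangle=\langle a-b-c+d,\,(h-f)-(g-e)\rangle$; \eqref{eq:vv-path} is the same with $h-g$ and $f-e$ in place of $h-f$ and $g-e$; \eqref{eq:hh-path} and \eqref{eq:hv-path} differ only in grouping $d-b$ with $c-a$ rather than $d-c$ with $b-a$; and the nine-term form \eqref{eq:9-paths} collapses after separately grouping the $(d-a)$, $(b-a)$, and $(c-a)$ contributions against $(e-f-g+h)$. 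In every case the formal combinations reduce to $\xi=a-b-c+d$ and $\eta=e-f-g+h$, which matches the expression for $\Dgm F[d,h]$ above, so all five agree.

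The only real care lies in the setup rather than the algebra. I must choose the grade labels so that the signs produced by \cref{cor:compact-formula} agree with the conventions built into \eqref{eq:wxyz}, and I must confirm the zero convention for non-existent intervals is respected. The latter is automatic: $\langle\cdot,\cdot\rangle$ is defined termwise, a missing interval contributes $0$ on both sides, and bilinearity never requires an actual inequality $x\le y$ to hold. Thus I expect the main obstacle to be purely organizational --- framing the computation bilinearly so that all five identities, including \eqref{eq:9-paths}, follow from the same two-line manipulation rather than from five separate sixteen-term expansions.
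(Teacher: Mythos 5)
Your proof is correct and is essentially the paper's own argument: both expand each right-hand side via \eqref{eq:wxyz} and match the resulting sixteen-term signed sum against \cref{cor:compact-formula}, with your pairing $\langle a-b-c+d,\; e-f-g+h\rangle$ being exactly that sum. The formal bilinear bookkeeping is a tidier way to organize the same expansion, not a different route.
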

\begin{proof}
    Expanding any one of the five formulae using \cref{eq:wxyz} results in
    \[
        (dh - df - dg + de) - (ch - cf - cg + ce) -
            (bh - bf - bg + be) + (ah - af - ag + ae),
    \]
    which equals $\Dgm F [d,h]$ by \cref{cor:compact-formula}.
\end{proof}

If two different simplices, $\ssx$
and $\tsx$, appear as we go from $a$ to $d$ (or from $e$ to $h$) and their
pairing does not switch, after a transposition, along any path through the
bifiltration, then the value of the diagram $\Dgm F[d, \cdot]$
(or $\Dgm F[\cdot, h]$) is 0.

\begin{lemma}
    \label{lem:pairing-switch}
    Suppose $F(x - (1,1)) - F(x) = \{ \ssx, \tsx \}$.
    Let $p_1$ and $p_2$ be two paths that both pass through grades $x - (1,1)$
    and $x$ and only differ in the step they take in between those grades.
    If for any such two paths, the pairing of simplices $\ssx$ and $\tsx$ does
    not switch, then
    \[
        \Dgm F [x, \cdot] = 0 \qquad \textrm{and} \qquad \Dgm F [\cdot, x] = 0.
    \]
\end{lemma}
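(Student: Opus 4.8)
The plan is to reduce this two-dimensional statement to the ordinary one-dimensional persistence pairing along a pair of paths that differ by a single transposition at $x$, and then invoke the vineyard transposition analysis. I will prove $\Dgm F[x, \cdot] = 0$; the claim $\Dgm F[\cdot, x] = 0$ will follow from the very same computation. Write $d = x$, let $a = d - (1,1)$, and let $b, c$ be the two intermediate grades of the square below $x$, as in \cref{fig:four-paths-compact}. By hypothesis $\ssx$ and $\tsx$ are exactly the simplices appearing on the two steps of this square, so the path taking the steps $a \to b \to d$ and the path taking $a \to c \to d$ induce $1$-filtrations that differ precisely by transposing $\ssx$ and $\tsx$; these are the paths $p_1, p_2$ of the statement.

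First I fix an arbitrary $h \geq x$ and compute $\Dgm F[d, h]$ using \cref{eq:vh-path} of \cref{lem:dgm-from-paths}, namely $\Dgm F[d,h] = (cdfh - abfh) - (cdeg - abeg)$. Recall from \cref{eq:wxyz} and \cref{cor:compact-1D-formula} that each quantity $wxyz$ is the ordinary $1$-dimensional persistence pairing value for the interval $[x,z]$ in the chain $F(w) \subseteq F(x) \subseteq F(y) \subseteq F(z)$: it equals $1$ when the simplex created on the step $w \to x$ is paired with the simplex destroyed on the step $y \to z$, and $0$ otherwise. In particular, $cdfh$ and $abfh$ both record whether the transposed simplex (the one created on the step $c \to d$ of the path through $c$, equivalently on the step $a \to b$ of the path through $b$) is paired with the simplex destroyed on the common upper step $f \to h$; likewise $cdeg$ and $abeg$ concern the same simplex and the common step $e \to g$. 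Thus in each of these two pairs the two terms are read off from $1$-filtrations that differ only by the transposition of $\ssx$ and $\tsx$ at $x$.

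Now I invoke the transposition analysis behind \cref{lem:nested-disjoint} (see \cref{sec:transposition-analysis}): a single transposition of two adjacent simplices can alter the persistence pairing only by switching those two simplices. Since by hypothesis the pairing of $\ssx$ and $\tsx$ does not switch, the two $1$-filtrations have identical persistence pairings, so each pair of terms agrees, $cdfh = abfh$ and $cdeg = abeg$, giving $\Dgm F[d,h] = 0$; as $h$ was arbitrary, $\Dgm F[x, \cdot] = 0$. For $\Dgm F[\cdot, x] = 0$ I instead set $h = x$, so that the square carrying $\ssx, \tsx$ now lies below $h$ (with $e = h - (1,1)$), and I regroup the same formula \cref{eq:vh-path} as $(cdfh - cdeg) - (abfh - abeg)$; each of these two differences is now read from two paths differing only by the transposition at $h = x$, which again does not switch, so both vanish.

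The main obstacle, and the only place the hypothesis enters, is this middle step: justifying that non-switching of the two transposed simplices forces the paired $wxyz$ values to coincide. This requires the precise content of the vineyard update, namely that the sole pairing change a single adjacent transposition can produce is the switch of the transposed pair itself — equivalently, that the death simplex on the upper step keeps the same birth partner, and that partner is exactly the simplex that migrates between the steps $a \to b$ and $c \to d$ under the transposition. One should also dispatch the degenerate boundary cases in which some of the grades $e, f, g$ or the intervals of \cref{eq:vh-path} fail to exist (for instance $h = \top$), but these are absorbed by the conventions of \cref{cor:compact-formula,cor:compact-1D-formula} that read a missing $ZB$ value as $0$.
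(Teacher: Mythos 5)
Your proof is correct and follows essentially the same route as the paper's: both reduce to the square around $x$ via \cref{eq:vh-path} of \cref{lem:dgm-from-paths}, invoke the 1-dimensional transposition (vineyard) analysis to conclude that non-switching forces the terms to agree pairwise ($abfh = cdfh$ and $abeg = cdeg$, resp.\ $cdfh = cdeg$ and $abfh = abeg$), and cancel. The only cosmetic difference is that the paper dismisses the second claim as analogous, while you spell it out by regrouping the same formula around the death square.
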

\begin{proof}
    We prove the first claim, $\Dgm F [x, \cdot] = 0$; the second claim is
    proved analogously.
    Suppose $x = d = (d_1, d_2)$ and assume, without loss of generality,
    that the appearance curve of $\ssx$ has a lower-corner $(\cdot, d_2)$,
    and the appearance curve of $\tsx$ has a lower-corner $(d_1, \cdot)$.
    %Suppose that for any path around $d$ (i.e., that passes through $a$ and $d$)
    %simplex $\ssx$ is positive. %, i.e., it is paired up.
    From the 1-dimensional case,
    we know that if for any two paths going through $a$ and $d$,
    simplex $\ssx$ has the same pairing, then for any step $x \to y$, we
    have:
    \[
        abxy = cdxy \qquad \textrm{and} \qquad xyab = xycd,
    \]
    depending whether $d \leq x$ or $y \leq a$.

    Taking $x,y$ to be either $f,h$ or $e,g$, we get:
    \begin{align*}
        0 &= (cdfh - abfh) - (cdeg - abeg) \\
          &= \Dgm F [d, h].     \tag{By \cref{eq:vh-path} in \cref{lem:dgm-from-paths}.}
    \end{align*}
\end{proof}

\paragraph{Three cases.}
Because $F$ is non-degenerate, there are at most two persistence pairs between
$d$ and $h$ in the filtration
$F(a) \subseteq F(d) \subseteq F(e) \subseteq F(h)$.

\begin{lemma}
    If $F$ is non-degenerate, then $adeh \in \{0, 1, 2 \}$.
\end{lemma}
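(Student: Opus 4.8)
The plan is to recognize $adeh$ as an ordinary $1$-dimensional persistence diagram value and then bound it by the number of simplices that can be born in the passage from $a$ to $d$. By \cref{eq:wxyz} we have $adeh = dh - de - ah + ae$, and by \cref{cor:compact-1D-formula} this is precisely $\Dgm G[d,h]$ for the sub-filtration $G : F(a) \subseteq F(d) \subseteq F(e) \subseteq F(h)$ (regarded as a chain, exactly as in the proof of \cref{lem:path-invariance}). So it suffices to prove that this value is non-negative and at most $2$.

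For non-negativity, I would refine $G$ into a chain $G'$ in which at most one simplex enters at each step, breaking ties arbitrarily as in \cref{sec:preliminaries}. Applying \cref{thm:galois} to the ceiling Galois connection that subdivides each step of the chain (the one-dimensional instance of \cref{cor:coarsening-refining}) expresses $\Dgm G[d,h]$ as the sum of the values $\Dgm G'[i',j']$ over the refined intervals $[i',j']$ lying over $[d,h]$. Each such summand is a standard persistence pairing value, hence lies in $\{0,1\}$ (including the essential case where $j'$ lies over $\top$, in which it records whether a single simplex creates an essential class), so the sum is non-negative.

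For the upper bound, the key observation is that $a = d - (1,1)$, so the passage from $F(a)$ to $F(d)$ is realized by two steps, say $a \to b \to d$, each of which adds at most one simplex by non-degeneracy; hence $|F(d) - F(a)| \le 2$. Under the ceiling connection, the refined births $i'$ lying over $d$ are exactly the simplices of $F(d) - F(a)$, so there are at most two of them, and each is the positive end of at most one pair in $G'$. Since $\Dgm G[d,h]$ is the sum of the values $\Dgm G'[i',j'] \in \{0,1\}$ taken over precisely those births $i'$ over $d$ whose deaths $j'$ lie over $h$, the sum is at most $2$.

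Combining the two estimates yields $adeh \in \{0,1,2\}$. The only point that needs care is the bookkeeping of the Galois connection, namely that ``born over $d$'' corresponds exactly to ``positive simplex in $F(d) - F(a)$'' and that this set has size at most $2$; once \cref{thm:galois} is set up for the chain refinement, I expect no genuine obstacle.
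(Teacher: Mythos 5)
Your proposal is correct and follows essentially the same approach as the paper: both hinge on the observation that non-degeneracy permits at most one simplex per step, hence at most two simplices between $F(a)$ and $F(d)$, so at most two classes can be born there and die between $e$ and $h$. The only difference is that the paper asserts this counting interpretation of $adeh$ outright, whereas you justify it (non-negativity and the bound by the number of added simplices) via the refinement Galois connection of \cref{thm:galois}; that is added rigor for the same argument, not a different route.
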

\begin{proof}
    Because the filtration is non-degenerate, at most one simplex appears at any
    step around each one of the squares, and therefore at most two simplices
    appear between $F(a)$ and $F(d)$ as well as between $F(e)$ and $F(h)$. It
    follows that at most two classes can be born between $a$ and $d$ and die
    between $e$ and $h$.
\end{proof}

We now analyze the three possible values of $adeh$ separately.

\subsection{No Pairs}
If there are no 1-dimensional persistence pairs between $d$ and $h$, then
the value of the 2-dimensional diagram is zero.

\begin{thm}
    If $adeh = 0$, then $\Dgm F[d,h] = 0$.
\end{thm}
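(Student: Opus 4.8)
The plan is to prove the stronger statement that \emph{every} one of the nine summands in the expansion \eqref{eq:9-paths},
\[
\Dgm F[d,h] = adeh - adef - adeg - abeh - aceh + abef + abeg + acef + aceg,
\]
vanishes on its own once $adeh = 0$; their alternating sum is then trivially $0$. The whole argument rests on reading each quantity $wxyz$ not as the formal combination \eqref{eq:wxyz}, but as a genuine count: a non-negative integer that cannot increase when the birth interval $[w,x]$ or the death interval $[y,z]$ is shrunk. Since each of the nine terms has the shape ``birth interval inside $[a,d]$, death interval inside $[e,h]$'', each is squeezed between $0$ and $adeh = 0$.

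First I would record non-negativity. Writing $Z(x) := Z_d F(x)$ and $B(z) := B_d F(z)$ and expanding \eqref{eq:wxyz} gives
\[
wxyz = \dim\big(Z(x)\cap B(z)\big) - \dim\big(Z(x)\cap B(y)\big) - \dim\big(Z(w)\cap B(z)\big) + \dim\big(Z(w)\cap B(y)\big).
\]
Because $w \le x$ and $y \le z$, both $Z(w)\cap B(z)$ and $Z(x)\cap B(y)$ are subspaces of $Z(x)\cap B(z)$, and their intersection is exactly $Z(w)\cap B(y)$. Inclusion--exclusion for subspace dimensions then identifies $wxyz$ with the dimension of the quotient of $Z(x)\cap B(z)$ by the sum of these two subspaces, which is manifestly $\ge 0$. (Equivalently, $wxyz$ is the number of persistence pairs born between $F(w)$ and $F(x)$ and dying between $F(y)$ and $F(z)$ in the coarsened $1$-filtration $F(w)\subseteq F(x)\subseteq F(y)\subseteq F(z)$, via \cref{cor:compact-1D-formula}.)

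Next I would obtain monotonicity as a purely formal consequence of the same identity: the difference of two nested terms is again a term of this kind, hence non-negative. For instance, shrinking the death interval gives $adeh - adef = adfh \ge 0$, and shrinking the birth interval gives $adeh - abeh = bdeh \ge 0$. It follows that $adef, adeg, abeh, aceh \le adeh = 0$, so these four terms are $0$; then each corner term is in turn squeezed, e.g.\ $abef \le abeh = 0$ (since $abeh - abef = abfh \ge 0$), and similarly for $abeg$, $acef$, $aceg$. Thus all nine terms vanish, and substituting into \eqref{eq:9-paths} gives $\Dgm F[d,h] = 0$.

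The only real subtlety, and the step I would be most careful with, is the passage from algebra to counting. The identity \eqref{eq:9-paths} holds verbatim regardless of the value of $adeh$, so treating the terms as mere signed integers renders the hypothesis $adeh = 0$ useless. The content is precisely that each $wxyz$ is non-negative and monotone under nesting, which is what the quotient-space description (or, equivalently, the coarsened-filtration reading through \cref{cor:compact-1D-formula}) supplies; everything else is the routine bookkeeping that all nine birth/death rectangles nest inside the rectangle cut out by $[a,d]$ and $[e,h]$.
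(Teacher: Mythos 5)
Your proposal is correct and takes essentially the same approach as the paper: both prove that every term on the right-hand side of \cref{eq:9-paths} vanishes individually, because each term counts pairs whose birth and death intervals nest inside $[a,d]$ and $[e,h]$, and so is dominated by $adeh=0$. The paper compresses this into one sentence using the pair-counting interpretation, whereas you make the two needed facts (non-negativity of each $wxyz$ and monotonicity under nesting) explicit via the subspace-dimension identity --- a more detailed justification of the same argument, not a different route.
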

\begin{proof}
    If $adeh=0$, then every one of the terms on the right-hand side of
    \cref{eq:9-paths} in
    \cref{lem:dgm-from-paths} are also zero. (If any one of them was positive, it would
    contribute to a pair between $d$ and $h$ in the coarser filtration.)
\end{proof}

\subsection{Single Pair}
\label{sec:single-pair}
If there is one 1-dimensional pair between $d$ and $h$, then the value of the
2-dimensional diagram has a simple expression. (This is a crucial result for our
algorithm; we spell out its implications after the proof.)

\begin{thm}
    \label{thm:product-formula}
    If $adeh = 1$, then the value of the 2-dimensional diagram is
    given by
    \[
        \Dgm F[d,h] = B \cdot D,
    \]
    where
    \begin{align*}
        B &= 1 - abeh - aceh \\
        D &= 1 - adef - adeg.
    \end{align*}
\end{thm}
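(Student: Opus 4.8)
The plan is to reduce the claim to a single additive identity and then prove that identity by a short case analysis driven entirely by the hypothesis $adeh = 1$. First I would expand the asserted product,
\[
    B\cdot D = 1 - (abeh+aceh) - (adef+adeg) + (abeh+aceh)(adef+adeg),
\]
and compare it with the value obtained by substituting $adeh = 1$ into \eqref{eq:9-paths} of \cref{lem:dgm-from-paths},
\[
    \Dgm F[d,h] = 1 - (abeh+aceh) - (adef+adeg) + (abef+abeg+acef+aceg).
\]
These two expressions agree in every term except the last, so it suffices to prove the cross-term identity
\[
    abef+abeg+acef+aceg = (abeh+aceh)(adef+adeg).
\]

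The only ingredients I would use are the telescoping identities that follow formally from the definition \eqref{eq:wxyz} and the non-negativity of the four-letter symbols. Splitting the birth gap $a\to d$ through $b$ gives $adeh = abeh + bdeh$, $adef = abef + bdef$, and $adeg = abeg + bdeg$; splitting the death gap $e\to h$ through $f$ or $g$ gives $abeh = abef + abfh$, $abeh = abeg + abgh$, $bdeh = bdef + bdfh$, and $bdeh = bdeg + bdgh$ (each is a one-line check from \eqref{eq:wxyz}). Every four-letter symbol $wxyz$ is, by \cref{cor:compact-1D-formula}, the $1$-parameter persistence value of the subfiltration $F(w)\subseteq F(x)\subseteq F(y)\subseteq F(z)$, hence a non-negative integer; combined with $adeh = abeh + bdeh = 1$ this forces $abeh \in \{0,1\}$, and likewise $aceh,adef,adeg\in\{0,1\}$.

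With this in hand I would prove the two halves $abef+abeg = abeh(adef+adeg)$ and $acef+aceg = aceh(adef+adeg)$ separately (they are symmetric under swapping $b$ and $c$, using $adeh = aceh + cdeh = 1$) and then add them. For the first half, if $abeh = 0$ then the death-splits $abeh = abef + abfh = abeg + abgh$ with all summands non-negative force $abef = abeg = 0$, so both sides vanish. If instead $abeh = 1$, then $bdeh = adeh - abeh = 0$, and the death-splits $bdeh = bdef + bdfh = bdeg + bdgh$ force $bdef = bdeg = 0$; the birth-splits then collapse to $abef = adef$ and $abeg = adeg$, giving $abef + abeg = adef + adeg = abeh(adef+adeg)$. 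Summing the two halves yields exactly the cross-term identity, and hence $\Dgm F[d,h] = B\cdot D$.

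The proof is thus purely additive: although the individual symbols have persistence-pairing interpretations along different paths through $L_n$ (where the transposition analysis of \cref{lem:nested-disjoint} would be relevant), the telescoping identities are formal consequences of the birth--death values, so no path-dependence ever needs to be confronted. The one conceptual point I expect to be the crux—and the place to state carefully—is that $adeh = 1$ is precisely what makes $abeh = 1$ force the leftover term $bdeh$ to vanish (and similarly $cdeh$); under $adeh = 2$ these same steps break down, which is consistent with that case being treated separately. The only other thing to pin down is the non-negativity of the four-letter symbols, which I would justify in one sentence via \cref{cor:compact-1D-formula}.
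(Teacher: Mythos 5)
Your proof is correct, and while it shares the paper's (essentially forced) outer skeleton---expand $B \cdot D$ and match it against \cref{eq:9-paths} of \cref{lem:dgm-from-paths}, so that the whole theorem reduces to identifying the cross terms---it establishes those cross terms by a genuinely different mechanism. The paper proves four multiplicative identities (\cref{prop:path-product}), e.g.\ $abeh \cdot adef = abef$, by a semantic argument: under $adeh=1$ there is a unique class born between $a$ and $d$ and dying between $e$ and $h$, and each factor records whether that class is born at $b$, respectively dies at $f$. You instead prove the summed identity $abef+abeg+acef+aceg = (abeh+aceh)(adef+adeg)$ formally, from two ingredients only: the telescoping identities that follow from \cref{eq:wxyz} (such as $adeh = abeh + bdeh$ and $bdeh = bdef + bdfh$), and non-negativity of the four-letter symbols as $1$-parameter diagram values (\cref{cor:compact-1D-formula}). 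The case split $abeh \in \{0,1\}$, with $abeh=1$ forcing $bdeh=0$ and hence $bdef=bdeg=0$, is exactly where the hypothesis $adeh=1$ enters, and it is airtight; your remark that this is the step that breaks when $adeh=2$ is also accurate. As for what each approach buys: yours is more elementary and self-contained, requiring no reasoning about individual homology classes or choices of paths (only additivity and positivity of diagram values), whereas the paper's argument produces the finer term-by-term identities and, more importantly, the birth/death interpretation of the signs of $B$ and $D$ that the ``Implications'' paragraph and the algorithm in \cref{sec:algorithm} rely on. Note, incidentally, that your case analysis actually yields each of the paper's four identities separately (when $abeh=0$ both sides of $abeh \cdot adef = abef$ vanish, since $abef \le abeh$ by the death-split; when $abeh=1$ one gets $abef = adef$), so your argument can be read as an alternative, path-free proof of \cref{prop:path-product} itself. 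The one point to state explicitly in a final write-up is the non-negativity of the symbols $wxyz$: either cite the classical pairing interpretation of $1$-parameter diagrams, or argue directly from the injection $(Z_w \cap B_z)/(Z_w \cap B_y) \hookrightarrow (Z_x \cap B_z)/(Z_x \cap B_y)$, which is one line of linear algebra.
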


To prove this theorem, we need a set of auxiliary formulae.

\begin{prop}
    \label{prop:path-product}
    If $adeh = 1$, then
    \begin{align*}
        abeh \cdot adef &= abef \\
        aceh \cdot adef &= acef \\
        abeh \cdot adeg &= abeg \\
        aceh \cdot adeg &= aceg. \\
    \end{align*}
\end{prop}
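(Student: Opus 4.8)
The plan is to reinterpret each of the eight birth--death quantities in the statement as the indicator of a single persistence pairing, and then to reduce all four identities to one finite case check that is governed by the hypothesis $adeh = 1$.

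First I would fix the simplices. By non-degeneracy, $F(d) - F(a)$ consists of at most two simplices; call them $\sigma,\tau$, with $\sigma$ entering on the step into $b$ and $\tau$ on the step into $d$ along the path through $b$ (and in the opposite order along the path through $c$). Likewise $F(h) - F(e) = \{\sigma', \tau'\}$, with $\sigma'$ entering into $f$ and $\tau'$ into $h$ along the path through $f$. I would then work in the $1$-filtration $F_{bf}$ induced by a path taking the steps $a \to b \to d$ and $e \to f \to h$. Because the quantities $adeh, abeh, adef, abef$ are M\"obius inversions of the birth--death function, they are path-independent, and by \cref{cor:compact-1D-formula} each equals a count of persistence pairs whose birth lies in a prescribed range and whose death lies in another. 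Since exactly one simplex enters on each relevant step, these ranges single out specific simplices; writing $[\alpha \leftrightarrow \beta]$ for the indicator that $\alpha$ is paired with $\beta$, I would record in $F_{bf}$ the four numbers $p_{11} = [\sigma \leftrightarrow \sigma']$, $p_{12} = [\sigma \leftrightarrow \tau']$, $p_{21} = [\tau \leftrightarrow \sigma']$, $p_{22} = [\tau \leftrightarrow \tau']$, and observe that $abef = p_{11}$, $abeh = p_{11} + p_{12}$, $adef = p_{11} + p_{21}$, and $adeh = p_{11}+p_{12}+p_{21}+p_{22}$.

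The combinatorial core comes next. The persistence pairing is a partial matching on simplices, so each of $\sigma,\tau,\sigma',\tau'$ is matched at most once; hence the matrix $(p_{ij})$ has all row and column sums at most $1$. Together with $adeh = 1$ this forces exactly one $p_{ij}$ to equal $1$ and the other three to vanish. The first identity $abef = abeh \cdot adef$ is then precisely the assertion $p_{11} = (p_{11}+p_{12})(p_{11}+p_{21})$, which I would confirm by checking the four cases according to which single $p_{ij}$ is nonzero. The remaining three identities follow from the symmetry of the configuration under swapping the two birth steps ($b \leftrightarrow c$, i.e. $\sigma \leftrightarrow \tau$) and/or the two death steps ($f \leftrightarrow g$, i.e. $\sigma' \leftrightarrow \tau'$): each swap carries the first identity to one of the others, with the identical finite check now performed in $F_{cf}$, $F_{bg}$, or $F_{cg}$.

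I expect the main obstacle to be bookkeeping in the translation step rather than any deep difficulty: one must ensure that each quantity is read in a path realizing its birth and death ranges as honest single steps, and that the pairing indicators are well-defined across the paths in play --- here \cref{lem:path-invariance} and the path-independence of the birth--death M\"obius values do the work. I would also handle the degenerate possibilities in which $F(d)-F(a)$ or $F(h)-F(e)$ has fewer than two simplices: the absent simplices contribute zero indicators and empty birth or death ranges, so the corresponding identities collapse to $0 = 0$, while the single non-trivial identity is exactly the one-pair calculation above.
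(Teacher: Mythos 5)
Your proof is correct and takes essentially the same route as the paper's: both work in the single $1$-filtration through $a,b,d,e,f,h$, use $adeh=1$ to isolate the unique class born in $(a,d]$ and dying in $(e,h]$, and observe that $abeh$, $adef$, and $abef$ are exactly the indicators of that class being born at $b$, dying at $f$, and both, with the remaining identities handled by the same symmetry. Your $2\times 2$ matrix of pairing indicators is just a more explicit bookkeeping of the paper's "born at $b$ iff $abeh=1$; dies at $f$ iff $adef=1$" argument.
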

\begin{proof}
    We prove the first statement by considering the 1-dimensional filtration
    \[
        F(a) \subseteq F(b) \subseteq F(d) \subseteq F(e) \subseteq F(f) \subseteq F(h).
    \]
    The right-hand side of the first statement is 1 iff a class born at $b$ dies
    at $f$. Under the assumption $adeh=1$, consider the class that is born
    between $a$ and $d$ and dies between $e$ and $h$:
    \begin{enumerate}
        \item this class is born at $b$ iff $abeh = 1$ (otherwise, it is born at $d$).
        \item this class dies at $f$ iff $adef = 1$ (otherwise, it dies at $h$).
    \end{enumerate}
    It follows that if $adeh=1$, then $abeh \cdot adef = abef$.

    The other three statements are proved analogously.
\end{proof}

\begin{proof}[Proof of \cref{thm:product-formula}]
    \begin{align*}
        B  \cdot D & =  (1 - abeh - aceh)  \cdot (1 - adef - adeg) \\
                   & =   1 - adef - adeg - abeh - aceh\ + \\
                   &    \hspace{1in} abeh \cdot adef + abeh \cdot adeg + aceh \cdot
                     adef + aceh \cdot adeg \\
                   & =  1 - adef - adeg - abeh - aceh\ + \\
                   &    \hspace{1in} abef + abeg + acef + aceg   \tag{By \cref{prop:path-product}.} \\
                   &  =  adeh - adef - adeg - abeh - aceh\ +  \tag{By assumption $adeh=1$.}\\
                   &        \hspace{1in} abef + abeg + acef + aceg \\
                   & =  \Dgm F [d,h].     \tag{By \cref{eq:9-paths} in \cref{lem:dgm-from-paths}.}
    \end{align*}
\end{proof}

\paragraph{Implications.}
To understand the algorithmic implications of \cref{thm:product-formula}, we
need to dissect the terms $B$ and $D$. Because the terms on the right-hand side
of $B = 1- abeh - aceh$, involve only a single step ($a \to b$ or $a \to c$) through
the non-degenerate bifiltration, their values can be only 0 or 1. Accordingly,
$B$ can take on values $-1, 0, 1$.
\begin{itemize}
    \item
        $B = -1$ iff both $abeh$ and $aceh$ are 1.  This happens iff the class
        that dies between $e$ and $h$ is born along the steps $a \to b$ and $a
        \to c$ around the first square. If the same simplex appears at both
        steps, it means that $d$ is the upper-corner of its appearance curve.
        If two different simplices appear, it means that the pairing switches
        when we transpose them in the filtration and that the first-to-appear
        simplex creates the class that dies between $e$ and $h$.
    \item
        $B = 0$ iff only one of the two terms, $abeh$ or $aceh$, is 1. Without
        loss of generality, suppose $abeh = 1$. Then, because a class
        that dies between $e$ and $h$ has to be born between $a$ and $d$,
        $cdeh = 1$. This cannot happen if the same simplex appears at steps $a
        \to b$ and $a \to c$ (otherwise, we'd get a contradiction with
        \cref{lem:path-invariance}). If two different simplices appear at those steps,
        then the pairing does not switch when we transpose them.
        %, i.e., the same simplex remains paired with $h$.
    \item
        $B = 1$ iff both of the two terms, $abeh$ and $aceh$ are 0. In this
        case, the class that dies between $e$ and $h$ must be born along the
        steps $b \to d$ and $c \to d$ (which means $bdeh = cdeh = 1$).
        This is possible if $d$ is the lower corner of the appearance curve of a
        simplex. It is also possible for two different simplices to appear at
        the two steps (this happens if $d$ is a minimum of the intersection of
        upsets of their appearance curves).
        In this case, the pairing switches after their transposition,
        and the simplex that appears last is the one that
        creates the dying class.
\end{itemize}

Similar analysis applies to the term $D = 1 - adef - adeg$.
\begin{itemize}
    \item
        $D = -1$ iff the class born between $a$ and $d$ dies at the first step
        ($e \to f$ and $e \to g$) around the second square. This happens if the
        same simplex appears at the two steps, and $h$ is the upper-corner of
        its appearance curve; or if two different simplices appear, the pairing
        switches when their order is transposed, with the first-to-appear
        simplex involved in the pair.
    \item
        $D = 0$ iff the class dies at the first step or at the second step,
        depending on which way we go around the square. This happens if the
        pairing does not switch as we change how we go around the square.
    \item
        $D = 1$ iff the class dies at the second step ($f \to h$ and $g \to h$)
        around the second square. This happens if $h$ is a lower-corner in the
        appearance curve of a simplex, or if two different simplices appear
        and their pairing switches when they are transposed, with the
        last-to-appear simplex involved in the pair.
\end{itemize}

What is crucial for the algorithm in \cref{sec:algorithm} is that for the terms $B$
or $D$ to be non-zero, the respective squares must go around either the lower-,
or the upper-corner of the appearance curve of the same simplex, or if two
different simplices appear in the square, their pairing must switch, when we
transpose them in the filtration. (The latter case is consistent with
\cref{lem:pairing-switch}, but more precise. It tells us not only that some
value in the diagram might be non-zero, but also narrows down the locations of
the generalized pairs.)

\subsection{Two Pairs}
\label{sec:two-pairs}
If $adeh = 2$, then two pairs of simplices, $\ssx_1, \ssx_2$ and $\tsx_1,
\tsx_2$ appear in the first and in the second squares.

Consider the four paths around the two squares, see (\ref{eq:four-paths}), and
consider what happens to the pairing of the four simplices.
\begin{itemize}
    \item
        If the pairing never switches, then $\Dgm F [d,h] = 0$ by
        \cref{lem:pairing-switch}.
    \item
        If the pairing switches at least once, but doesn't switch everywhere,
        then along one of the paths the
        pairing has to be neither nested, nor disjoint (i.e., first simplex of
        $\ssx_1, \ssx_2$ is paired with the first simplex of $\tsx_1,\tsx_2$,
        and second is paired with second). Without loss of generality, suppose
        we get such a pairing along the path $\ldots abd \ldots efh \ldots$.
        \cref{lem:nested-disjoint} implies that no pairing switch is possible as
        we move to paths $\ldots acd \ldots efh \ldots$ and
        $\ldots abd \ldots egh \ldots$. It follows that the two pairings of the
        simplices along those two paths is nested. Since we assumed that at
        least one switch in pairing occurs, it follows that the pairing switches
        as we take either one of the two transitions to path
        $\ldots acd \ldots egh \ldots$, and in particular the pairs are nested.
        Using these four facts,
        implies
        \begin{align*}
            bdfh &= 1 \\
            acfh &= 1 \\
            bdeg &= 1 \\
            aceg &= 0.
        \end{align*}
        Substituting the four terms into \cref{eq:hh-path}, we get $\Dgm F[d,h] = -1$.
    \item
        If the pairing switches on each one of the four transpositions, then from
        \cref{lem:nested-disjoint}, it follows that the first simplex to appear
        of $\ssx_1$ and $\ssx_2$ is paired with the last simplex to appear of
        $\tsx_1$ and $\tsx_2$ (and vice versa, last-to-appear is paired with the
        first-to-appear). In this case, we can use any one of the
        \cref{eq:vh-path,eq:vv-path,eq:hh-path,eq:hv-path} to get
        $\Dgm F[d,h] = -2$.
        We note that this case occurs generically (more on that in the next
        section).
\end{itemize}

%The previous observation is true in general by \cref{lem:pairing-switch}.

%\Remark{A grade cannot be simultaneously a lower-corner of one birth curve and
%        an upper-corner of another birth curve (by non-degeneracy assumption).
%        This means, we only have to worry about the intersection of two upsets.}

\section{Algorithm}
\label{sec:algorithm}

We use the observations in the previous section to devise an algorithm that
tracks the changes in pairing along 1-dimensional paths through the 2-filtration
and identifies all intervals in the support of the generalized persistence
diagram; its high-level overview is in \cref{sec:algorithm-summary}.

\subparagraph*{Birth curves.}
Suppose step $(k-\delta_k,l-\delta_l) \to (k,l)$ in the 2-filtration
adds a \emph{negative} simplex $\tsx$. Then the \emph{birth curve} of
$\tsx$ at this step is a periphery $c$ such that if a path $p$
through this step enters the
periphery at step $(i - \delta_i, j - \delta_j) \to (i,j)$, then the simplex
$\ssx$ added along $p$ at this $(i+j)$-th step is paired with $\tsx$ in
the 1-dimensional filtration induced by the path.
\cref{lem:path-invariance} implies that birth curves are well-defined: if $\ssx$
and $\tsx$ are paired along one path through the two steps, then they are paired
along every path through them.

Our algorithm sweeps the 2-filtration and tracks birth curves of negative simplices.
Recall from \cref{sec:preliminaries} that for each birth curve, its
\emph{lower corners} are the minimal grades in its upset;
its \emph{upper corners} are grades $(i,j)$ in the upset,
such that $(i-1,j)$ and $(i,j-1)$ are also in the upset, but $(i-1,j-1)$ is not.
%We store the birth curves as a sequence of lower corners, each of which stores
%three columns, $R[\tsx]$, $V[\tsx]$, and $V[\ssx]$ ($R[\ssx]$ is implicitly 0).
%These columns are such that given any path as above, we can reconstruct the
%decomposition $R = DV$ of the boundary matrix $D$ of the 1-dimensional
%filtration by finding the unique lower corner that lies below grade $(i,j)$ and
%putting the three columns stored at that corner in the respective matrices.

\subparagraph*{Path traversal.}
We start with a path along the left and top edge of the 2-filtration,
$(0,0) \ldots (0,n) \ldots (n,n)$. The filtration that we get from this path is
the same as if we sorted all the simplices by the first coordinate of the
left-most grades in their appearance curves.
We compute persistence $R=DV$ for this filtration. To simplify exposition, for
every unpaired simplex $\ssx$, we add an implicit negative cell $\hat{\ssx}$ at
grade $(n+1,n+1)$, with $R[\hat{\ssx}] = D[\hat{\ssx}] = \ssx$ and
$V[\hat{\ssx}] = \hat{\ssx}$.

We sweep through the paths of the following form,
$(0,0)$ \ldots $(i-1,0)$ \ldots $(i-1,j)$, $(i,j)$ \ldots $(i,n)$ \ldots $(n,n)$,
transitioning one square at a time, by replacing $(i-1,j-1), (i-1,j), (i,j)$ with
$(i-1,j-1), (i,j-1), (i,j)$; see \cref{fig:path-traversal}. As we perform such
elementary steps, we build up the birth curves and report all the non-zero
intervals in the diagram whose upper endpoint is in grade $(i,j)$, i.e.,
all $\Dgm F[\cdot, (i,j)] \neq 0$.
%, including the
%columns in matrices $R$ and $V$ that they store.
%\begin{itemize}[noitemsep,nolistsep]
%    \item
%        Start with a filtration w.r.t.\ $x$-coordinates (think of this as nested 1D-like
%        filtration, where all $y$-coordinates are set to $\infty$ and ties are broken
%        by the $x$-coordinates).
%    \item
%        From left to right (i.e., in the order of $x$-coordinates), drag each
%        simplex down to the correct $y$ coordinate.
%    \item
%        Perform updates of the birth curves (and the corresponding vector
%        $R[i,j]$ and $V[i,j]$.
%\end{itemize}

\begin{figure}
    \centering
    \includegraphics{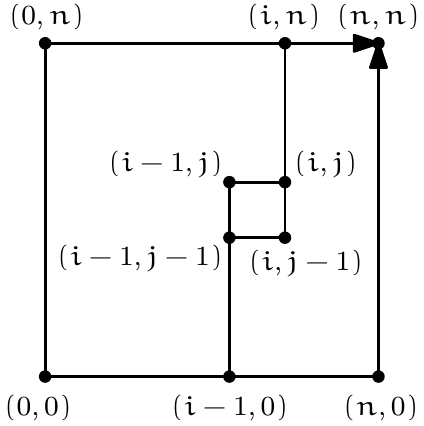}
    \caption{Path traversal starts with the path along the left and top edges of
             the 2-filtration, and through a sequence of elementary steps (one
             of which around $(i,j)$ is shown in the figure) reaches the path
             along the bottom and right edge of the 2-filtration.}
    \label{fig:path-traversal}
\end{figure}

\subparagraph*{Invariant.}
We maintain the following invariant,
necessary to verify the correctness of each step and the running time claim.
We emphasize in \cref{sec:updates} the key parts of the matrix updates that maintain it.
%, and we don't re-iterate them here.

\begin{enumerate}%[nosep]
    \item
        \label{cnd:lowest-one}
        Each negative simplex $\tsx$ along the current path (in the sense of
        \cref{fig:path-traversal}) maintains a birth curve, stored as a set of
        grades that represent its lower corners (by definition, all are below
        the grade of $\tsx$'s appearance along the path). For each lower corner $a$, we
        maintain three chains, $R[\tsx], V[\tsx], V[\ssx]$, such that
        $R[\tsx]$ and $V[\ssx]$ are cycles that appear in the complex $F(a)$,
        but not in any complex $F(a')$ with $a' < a$, and $R[\tsx] = D \cdot V[\tsx]$.
        %(and therefore in
        %any complex $F(a')$ for any grade $a' \geq a$).
        %$R[\ssx] = D \cdot V[\ssx] = 0$.
        %It follows that the
        %the lowest entry in $V[\ssx]$ and
        %(ordered with respect to any path that reaches grade $a$)
        %is $\ssx$, which is the last  as the lowest non-zero entry
        %\Remark{FIXME.}
    \item
        Any path that reaches the current grade $(i,j)$
        and then proceeds to grade $(i,n)$ and then $(n,n)$ induces a
        1-filtration. Assembling the columns $R[\tsx],V[\tsx], R[\ssx]=0,
        V[\ssx]$ that are stored at the lower corners below\footnote{Every grade
        on the periphery has a unique lower corner below it.}
        the grades at which
        the path enters the upsets of the birth curves --- ordering all such columns
        with respect to the path --- we get decomposition $R = DV$ that
        satisfies the reduction assumptions ($R$ is reduced, $V$ is invertible
        upper-triangular).
    \item
        \label{cnd:no-spurious-V}
        If for a set of simplices along a path,
        $\ldots \ssx \ldots \tsx \ldots \alpha \ldots \beta \ldots$,
        the pairing is neither nested nor disjoint --- $\ssx$ paired with
        $\alpha$, and $\tsx$ paired with $\beta$ ---
        we ensure that $V[\alpha, \beta] = 0$.
        (This condition is satisfied
        %when the matrices are reduced
        by the original algorithm~\cite{ELZ02}, and
        although the prior work~\cite{vineyards,morozov-phd} does not
        deliberately maintain this property, we explain in
        \cref{sec:transposition-analysis} the necessary extra update, and call
        it out in the text accompanying \cref{fig:pairing-up}.)
\end{enumerate}

\subsection{Updates}
\label{sec:updates}

As we update the path, shifting it around the square $(i,j)$,
suppose only one new simplex, $\ssx$, appears between grades $(i-1,j-1)$ and
$(i,j)$. The analysis of the previous section implies that the diagram can have
a non-zero entry for $\Dgm F[\cdot, (i,j)]$ or $\Dgm F[(i,j), \cdot]$
only if $(i,j)$ is the lower or upper corner of the
appearance curve of $\ssx$. (Otherwise, either $B=0$ or $D=0$ in
\cref{thm:product-formula}.)

If $\ssx$ is positive, we add $(i,j)$ as a lower or upper corner of its pair's birth curve.
If $\ssx$ is negative, then for any
grade $(k,l) < (i,j)$ such that $(k,l)$ is in the upset of the birth curve
$\birth(\ssx)$, but $(k-1,l-1)$ is not, exactly one class is born and dies at
the respective grades along any path
\[
    \ldots, (k-1,l-1), \ldots, (k,l), \ldots, (i-1,j-1), \ldots, (i,j), \ldots
\]
In other words, we are exactly in the setting of \cref{sec:single-pair}.
From the analysis after the proof of \cref{thm:product-formula}, it follows that
if $(i,j)$ is the lower corner of the appearance curve of $\ssx$, then $D=+1$; if
it is the upper corner, then $D=-1$. Similarly, for every lower corner in the
birth curve $\birth(\ssx)$, $B=+1$; for every upper corner, $B=-1$.
Accordingly, we output $\Dgm F[a,b] = B \cdot D$, where $b = (i,j)$ and $a$ is
either the lower or upper corner in the birth curve.

The only remaining possibility is that two simplices, $\ssx$ and $\tsx$, appear
as we go from grade $(i-1,j-1)$ to grade $(i,j)$. Suppose that simplex $\ssx$
appears along the step $(i-1,j-1) \to (i, j-1)$, and simplex $\tsx$ appears
along the step $(i-1,j-1) \to (i-1,j)$.
%It is possible that no change happens between the two paths, for example,
%because the two relevant simplices $\ssx$ at grade $(i,k)$ and $\tsx$ at grade
%$(l,j)$ are nested, i.e., $(i,k) < (l,j)$. In this case, there is nothing to
%update: the two paths induce the same filtration.
%The only situation that deserves our attention is if $k < j$ and $l <
%i$.
%(Recall that we assume the grade of each simplex is distinct in each
%coordinate.)
%%: if either one of the two assumptions were an equality, then $\ssx =
%%\tsx$, and we would be in the setting of the previous paragraph.)
%%
In the remainder of this section, we analyze all possible scenarios involving
such $\ssx$ and $\tsx$.

%\Remark{This needs work.}
%The following lemma follows from the analysis in the previous section.
%\begin{lemma}
%    \label{lem:no-switch}
%    If the pairing doesn't switch, there is nothing to do.
%    (I.e., pairing extends.)
%\end{lemma}

\subparagraph*{$\ssx$ and $\tsx$ are both positive.}
Suppose $\ssx$ is paired with $\alpha$, and $\tsx$ is paired with $\beta$.
If $\beta$ comes first, then by \cref{lem:nested-disjoint} pairing of $\ssx$ and
$\tsx$ cannot switch between the two paths.
%, and from the previous section
%(\cref{lem:pairing-switch}), it follows that $(i,j)$ does not belong to any interval in the support of the
%generalized diagram.
It is, however, possible that the columns of $R[\alpha]$,
$V[\alpha]$, $V[\ssx]$, and $R[\beta]$, $V[\beta]$, $V[\tsx]$
need to be updated.
Such an update can be performed in linear time~\cite{vineyards}; see \cref{sec:transposition-analysis}.

%\begin{remark}
We note that the update of columns $V[\ssx]$ and $V[\tsx]$ is crucial in this
case.  It ensures that if $V[\ssx]$ contains $\tsx$ before the
transposition, then it doesn't after the transposition. This ensures correctness of
\cref{cnd:lowest-one} in the invariant:
column $V[\ssx]$ contains only simplices present at the lower corner
that stores it --- a corner yet to be reached in this case.
%satisfies the assumption that it is
%valid for any path through the part of the boundary of the upset that lies above
%it.
%\end{remark}

The only way the pairing of $\ssx$ and $\tsx$ can switch is
%$(i,j)$ can contribute to the support of $\Dgm$ is
if $\alpha$ comes before $\beta$, as in \cref{fig:pairing-up}.
%Because $\alpha$ and $\beta$ are nested (everything above the current sweep line
%is),
The two paths induce the following simplex orders: $\ldots \tsx \ssx \ldots \alpha \ldots
\beta$ (before) and $\ldots \ssx \tsx \ldots \alpha \ldots \beta$ (after).
\begin{figure}
    \includegraphics[width=\textwidth]{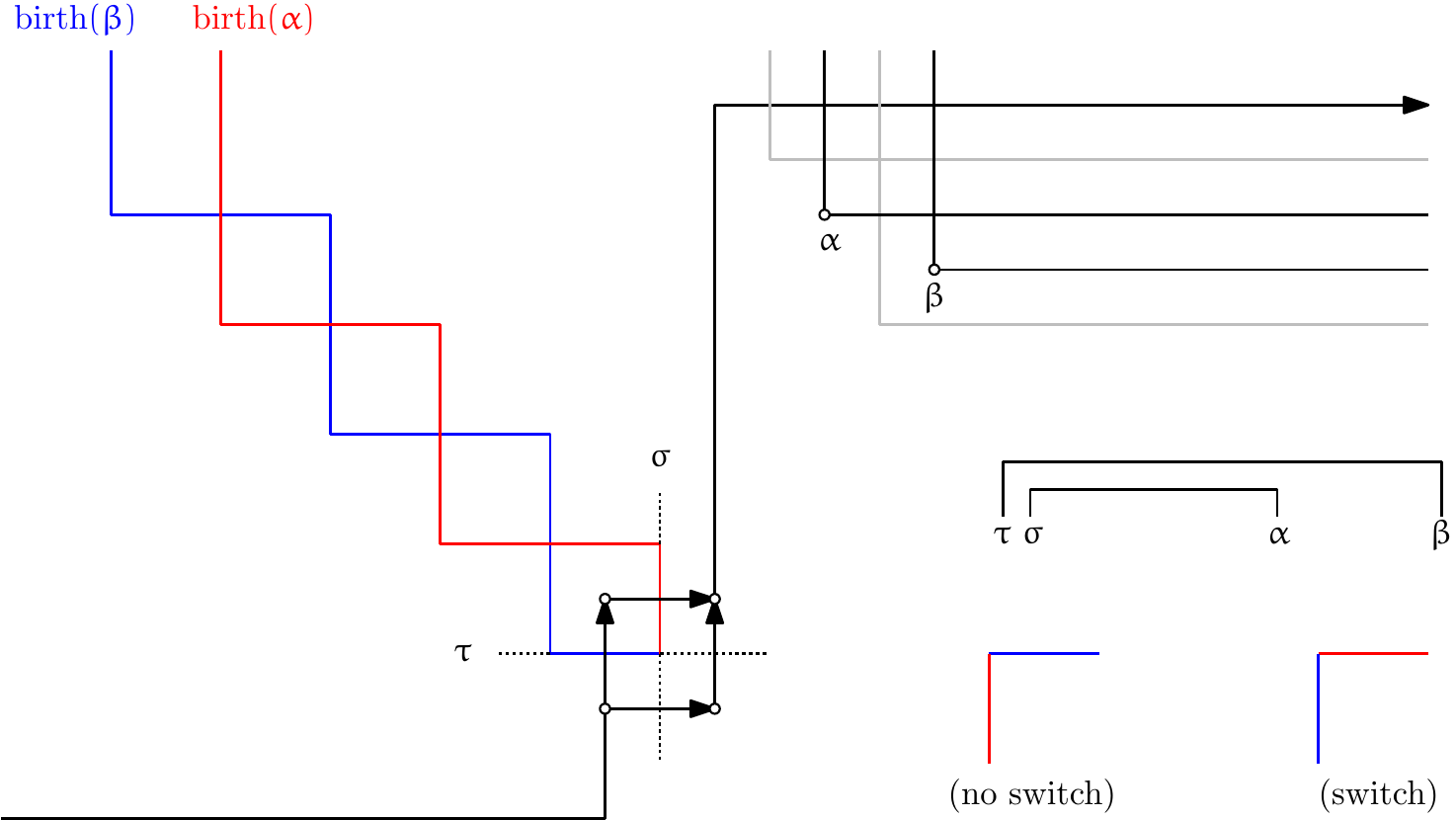}
    \caption{Positive simplices $\ssx$ and $\tsx$. The birth curve $\birth(\alpha)$ is shown in red; the
             birth curve $\birth(\beta)$, in blue.}
    \label{fig:pairing-up}
\end{figure}
We can determine in constant time whether the pairing of $\ssx$ and $\tsx$
switches between the two paths. Depending on the answer, we update the birth
curves of $\alpha$ and $\beta$ in one of the two ways, shown in
\cref{fig:pairing-up}. We note that if the pairing switches,
$(i,j)$ becomes a lower corner of the birth
curve $\birth(\alpha)$, and the upper corner of the birth curve $\birth(\beta)$.
We store the updated columns $R[\alpha]$, $V[\alpha]$, $R[\tsx]$ with the new
lower corner of $\birth(\alpha)$.

If the pairing does not switch, and thus goes from nested to neither nested nor
disjoint, it is crucial to update the column $V[\beta]$, stored at the lower
corner of $\birth(\beta)$ defined by $\tsx$, to ensure $V[\alpha,\beta] = 0$
(necessary for \cref{cnd:no-spurious-V} in the invariant).
An example of such an update is spelled out in Case 1 in
\cref{sec:transposition-analysis}; it takes linear time.
%We note that even though the update may seem
%to be from right to left because the lower corner of $\birth(\beta)$ comes
%before the current grade $(i,j)$, in reality it is left-to-right because simplex
%$\beta$ appears after $\alpha$, when ordered by the current path.

\subparagraph*{$\ssx$ is positive, $\tsx$ is negative.}
This scenario is illustrated in \cref{fig:pairing-down-up}.
Consider any path that reaches grade $(i-1,j-1)$ and then proceeds like the
first path in \cref{fig:path-traversal}. Suppose it induces an ordering of
simplices
$\ldots \beta \ldots \tsx \ssx \ldots \alpha$, where $\beta$ is paired with
$\tsx$ and $\ssx$ is paired with $\alpha$.
We consider the path that differs by the transposition of $\ssx$ and $\tsx$,
$\ldots \beta \ldots \ssx \tsx \ldots \alpha$.

If the pairing of $\ssx$ and $\tsx$ switches for one such path, it switches for
all such paths.
%(this is immediate because $\ssx$ becomes negative, and $\tsx$
%becomes positive for any such transposition).
%
\begin{figure}
    \includegraphics[width=\textwidth]{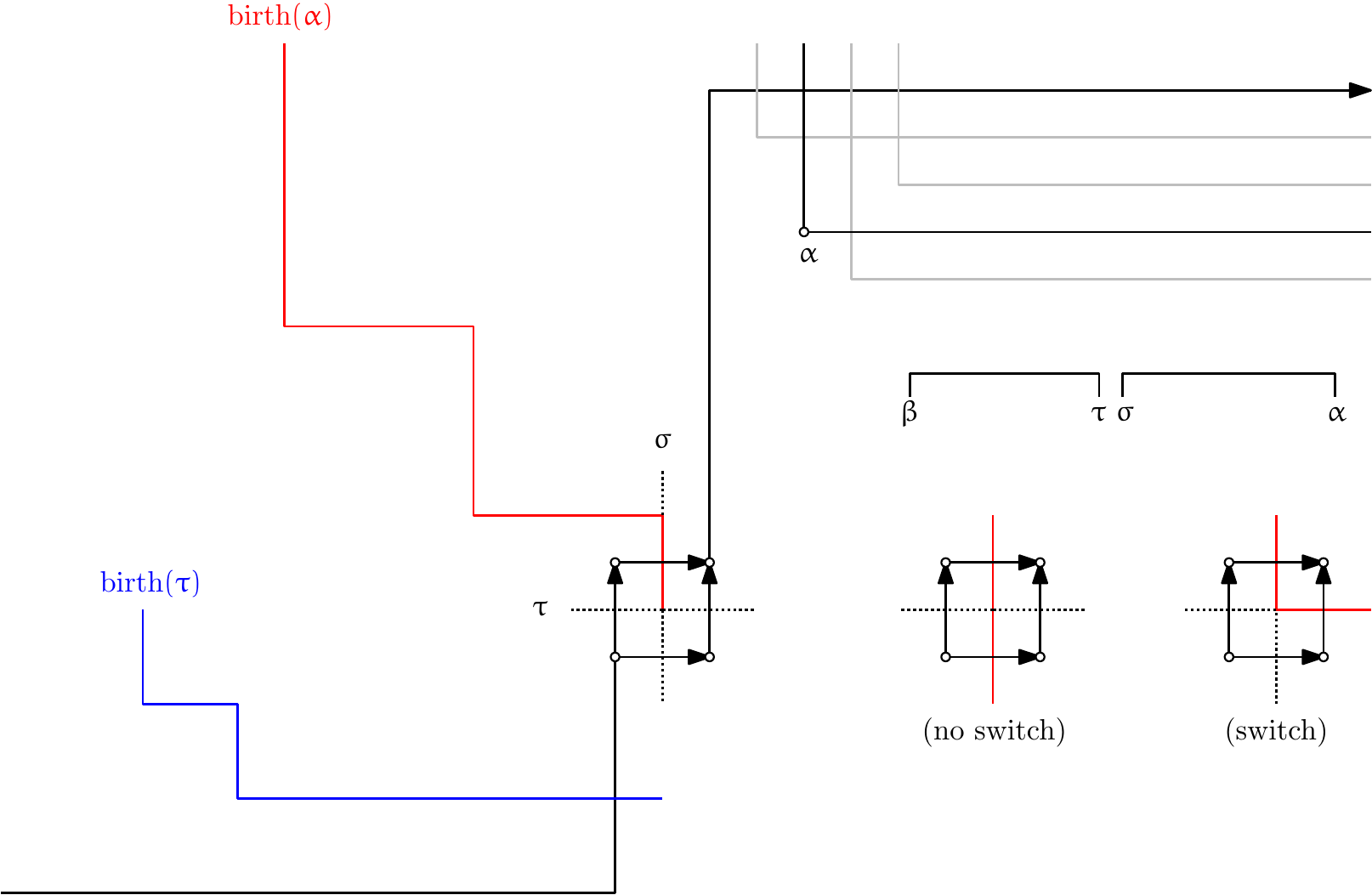}
    \caption{Positive $\ssx$ and negative $\tsx$. The birth curve $\birth(\alpha)$, in
             red; $\birth(\tsx)$, in blue.}
    \label{fig:pairing-down-up}
\end{figure}
We can determine in constant time whether this happens. If it
doesn't, there is nothing to report, and there is no need to
update columns $R[\alpha], V[\ssx]$ because $\tsx$ cannot appear
in them (this follows from \cite{vineyards}; see
\cref{sec:transposition-analysis}).
%Depending on whether the pairing of $\tsx$ and $\ssx$ switches, we update the birth
%curve of $\alpha$ in one of the two ways shown in \cref{fig:pairing-down-up}.
%
If the pairing does switch, we update the birth curve $\birth(\alpha)$ as shown
in \cref{fig:pairing-down-up}; grade $(i,j)$ becomes the lower corner of the birth curve.
$\ssx$ takes over the birth curve of $\tsx$.
For every grade entering the upset of the birth curve, we are in the setting of
\cref{sec:single-pair}. For \cref{thm:product-formula}, $D=-1$ because whichever
simplex appears first going from $(i-1,j-1)$ to $(i,j)$ kills the respective
class. For every lower corner in the birth curve, $B=+1$; for every upper
corner, $B=-1$.  Accordingly, we output $\Dgm(a,b) = -1$ for all grades $a$ that
are lower corners in $\birth(\ssx)$ and $b = (i,j)$;
and $\Dgm(a,b) = +1$ for all $a$ that are the upper corners in $\birth(\ssx)$.

Let us dwell for a moment on the updates to the columns of $V$, when the pairing
does switch.  As explained in \cref{sec:transposition-analysis} (Case 3), the
necessary update, encoded in matrix $X$, subtracts a multiple $\lambda$ of the
column $V[\tsx]$ from $V[\ssx]$ before the transposition (to produce matrix $VX$ in
\cref{sec:transposition-analysis}),
and then adds $(V[\ssx] - \lambda V[\tsx])$ to $\lambda V[\tsx]$
(after the transposition, to produce matrix $V' = PVXPZ$). In other words,
$V'[\tsx] = V[\ssx]$ and $V'[\ssx] = V[\ssx] - \lambda V[\tsx]$. The former equality
means that the birth curve $\birth(\alpha)$ doesn't require any updates. But the
latter equality means that for every lower corner of the birth curve
$\birth(\tsx)$, we need to add $V[\ssx]$ to $-\lambda$ multiplied by $V[\tsx]$ stored at that corner.
This update takes linear time per corner, but it's required only if the pairing
switches, in which case every corner contributes to a non-zero interval in the
generalized persistence diagram. We charge each such linear-time update to the
output, i.e., to the $O(Cm)$ term in the running time.

In summary, we can detect whether a switch in the pairing occurs --- and if it
doesn't, perform the necessary updates --- in linear time.
If the switch does occur, we update each step in the birth curve, but each such
update corresponds to an interval in the output.

\subparagraph*{$\ssx$ is negative, $\tsx$ is positive.}
In this case, the pairing is neither nested, nor disjoint, so by
\cref{lem:nested-disjoint} it cannot switch. But the birth curve $\birth(\ssx)$
may contain a lower corner at grade
$(\cdot,j)$, i.e., there exists a path along which $\tsx$ and $\ssx$ are
paired. We remove this corner.
%(The existence of such a
%corner does not contradict \cref{lem:path-invariance}. Addition of $\tsx$ along
%different paths results in different pairings. In particular, it can
%be present in multiple birth curves.)

\subparagraph*{$\ssx$ and $\tsx$ are negative.}
This scenario is illustrated in \cref{fig:pairing-down}. Because both simplices
are negative, each one has its own birth curve, $\birth(\ssx)$ and
$\birth(\tsx)$.
\begin{figure}
    \includegraphics[width=\textwidth]{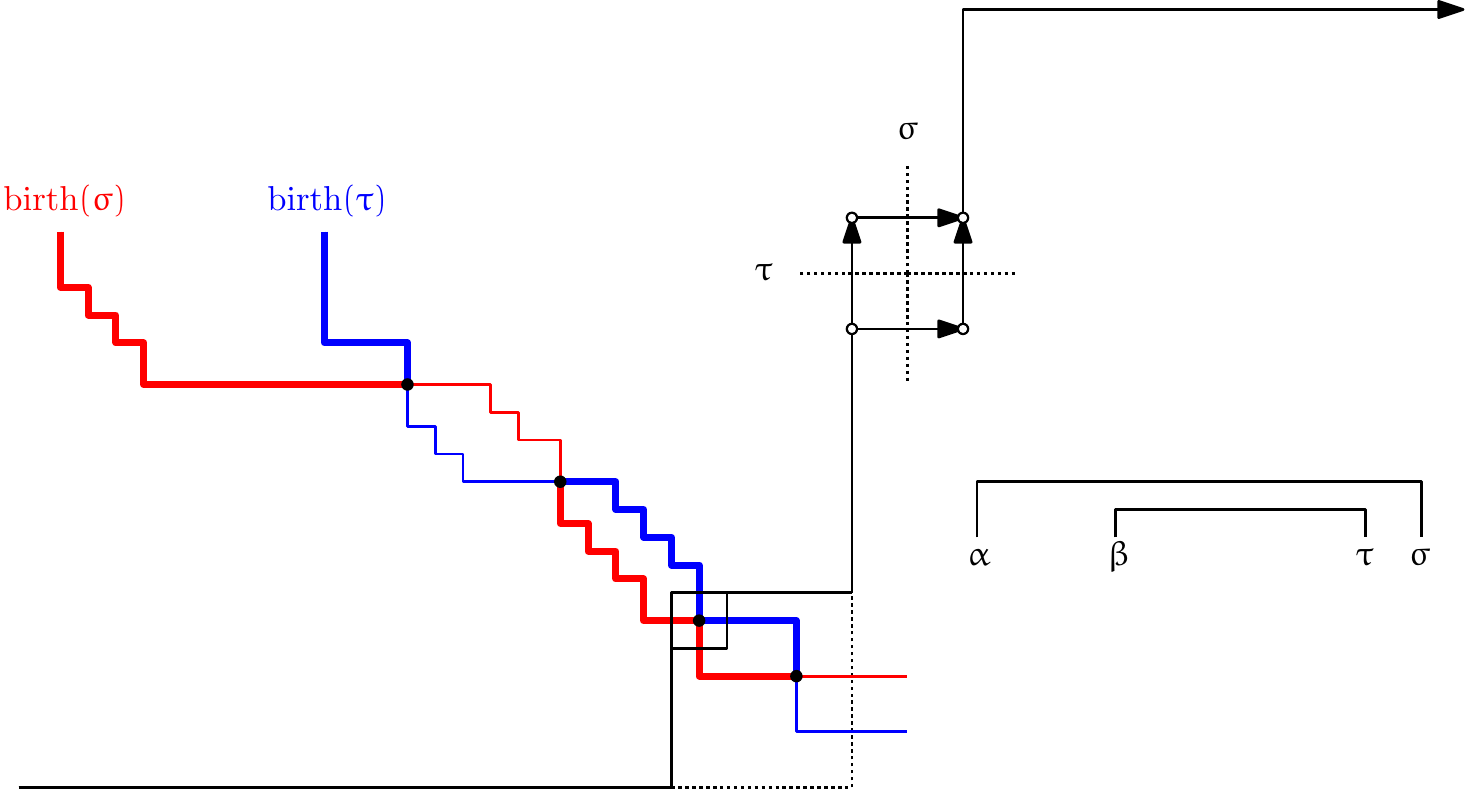}
    \caption{Negative $\ssx$ and $\tsx$.  The birth curve $\birth(\ssx)$, in
             red; the birth curve $\birth(\tsx)$, in blue.}
    \label{fig:pairing-down}
\end{figure}
We can split the birth curves into two types of segments: those where the birth
curve of $\ssx$ lies below that of $\tsx$, and vice versa. The pairing of $\tsx$
and $\ssx$ can switch only in filtrations induced by the paths through the former
(highlighted in bold in \cref{fig:pairing-down}). This follows
from \cref{lem:nested-disjoint}: only for such paths is the pairing of the two
simplices nested. (Moreover, for any path through the latter type of segment,
$V[\tsx,\ssx] = 0$ --- thanks to \cref{cnd:no-spurious-V} in the invariant
--- so no update is necessary for these segments.)

It follows from the stability of 1-dimensional persistence that if the pairing
of $\ssx$ and $\tsx$ switches for one path through the segment, it switches for
all paths through the segment. Accordingly, it suffices to only check the paths around
the grades, where the two birth curves intersect. We can locate all such
intersections in linear time.

There are four paths around the two corners:
\begin{center}
\begin{tabular}{ccc}
    $\ldots \alpha \beta  \ldots \tsx \ssx$  & \qquad & $\ldots \alpha \beta  \ldots \ssx \tsx$ \\
    $\ldots \beta  \alpha \ldots \tsx \ssx$  & \qquad & $\ldots \beta  \alpha \ldots \ssx \tsx$
\end{tabular}
\end{center}
We consider all combinations. In all figures, red signifies the pair of $\ssx$
and blue, the pair of~$\tsx$.

\subparagraph{Case A.}
Suppose the pairing of the first two paths (in which $\tsx$ comes before $\ssx$)
is as shown in the figure. Then there are two possibilities: either the pairing
switches when we transpose either pair of simplices in the top-right path,
$\ldots \alpha \beta \ldots \ssx \tsx \ldots$, or it doesn't. We note that if it
switches for one of the transpositions, it is forced to switch for both of them.
If the pairing doesn't switch, it means that $V[\tsx,\ssx] = 0$ for all columns
$V[\ssx]$ stored in the birth curve $\birth(\ssx)$, meaning there is nothing to
update.

\vspace{1ex}
\begin{center}
\includegraphics[page=2,width=\textwidth]{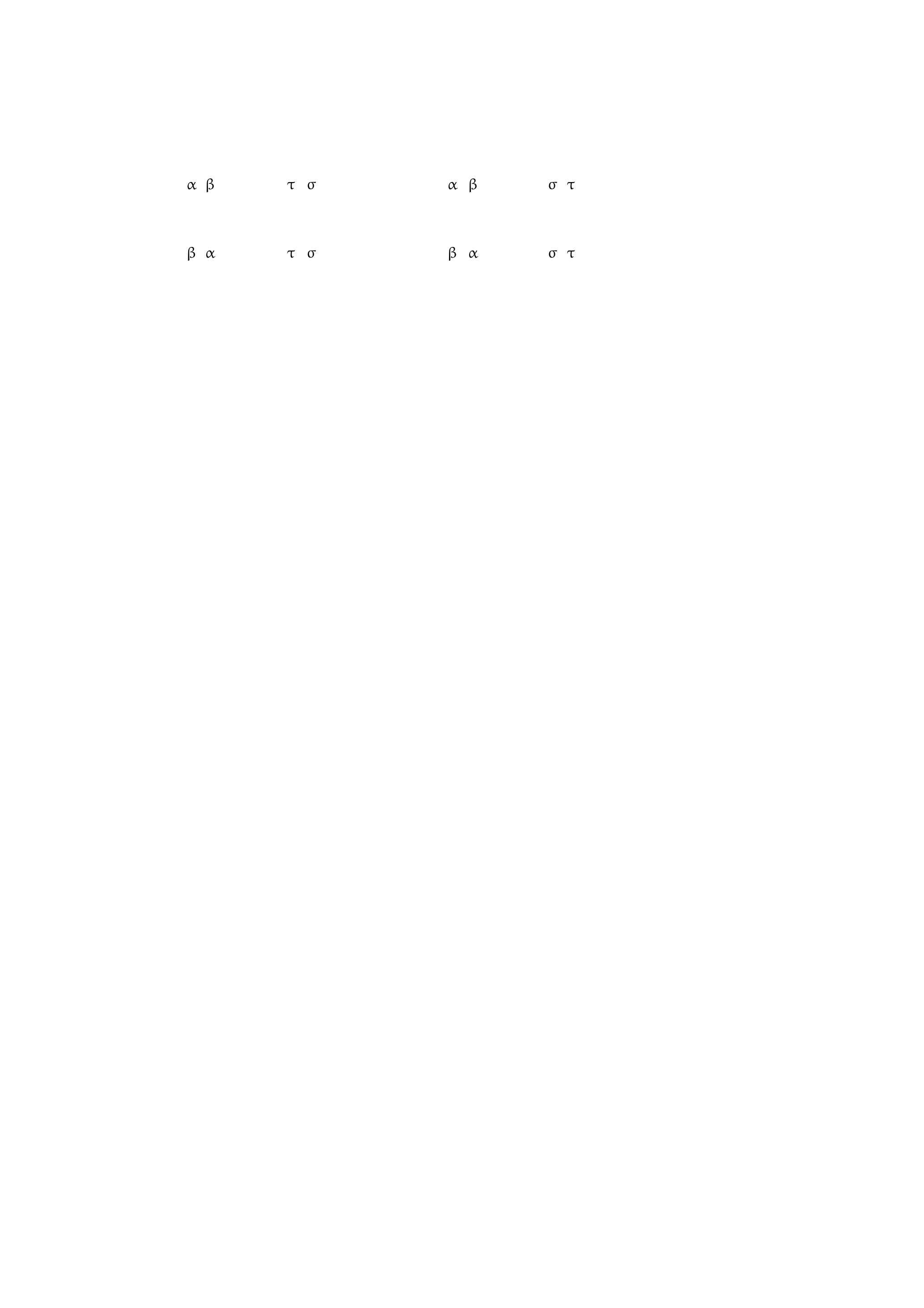}
\end{center}
If there is a switch in the pairing, the birth curves are updated (as shown on
the right of the figure) by swapping the respective segments.
The update of each lower corner along the segment
takes linear time, but each such corner also produces an interval in the
persistence diagram. So we charge the update to the output.
(We note that when updating the columns at the lower corners $a \in
\birth(\tsx)$ of the upper birth curve, we have to choose a corner $b \in
\birth(\ssx)$ of the lower birth curve that lies below $a$ to perform an update.
There can be multiple such choices, but any one of them works.)

The intervals reported in this case are
$\Dgm[a,b] = +1$ for $a$ in the lower corners of
$\birth(\tsx)$ or the upper corners of $\birth(\ssx)$
(the corners are restricted to the appropriate segments, and the birth curve
taken after the switch),
and $\Dgm[a,b] = -1$ for $a$ in the upper corners of
$\birth(\tsx)$ or the lower corners of $\birth(\ssx)$, with $b = (i,j)$.
(We note that for the lower and upper corners in the interiors of the segments,
i.e., away from their intersection,
these results follow from \cref{thm:product-formula} in \cref{sec:single-pair}. At the intersection point,
we are in the setting of \cref{sec:two-pairs}, and the result follows from the
second case in that section.)

\subparagraph{Case B.}
This case is symmetric to Case A. That case occurs at the bottom of a
segment; this case occurs at the top.

\vspace{1ex}
\begin{center}
\includegraphics[page=3,width=\textwidth]{figures/pairing-down-cases}
\end{center}

\subparagraph{Case C.}
The pairing shown in the figure is impossible since it implies that the pairing
switches for a transposition of pairs that are neither nested, nor disjoint,
violating \cref{lem:nested-disjoint}. (We note that there is no
contradiction with the previous figures, since there, after the transposition,
$\ssx$ comes before $\tsx$.)

\vspace{1ex}
\begin{center}
\includegraphics[page=4,width=\textwidth]{figures/pairing-down-cases}
\end{center}

\subparagraph{Case D.}
Suppose the pairing of the first two paths (in which $\tsx$ comes before $\ssx$)
is as shown in the next three figures. There are three possibilities: either the
pairing switches after the transposition of $\ssx$ and $\tsx$ in the second
path, but not the first; or it switches in the first path, but not the second;
or it switches in both. It's impossible for the pairing to remain the same
along both paths without violating \cref{lem:nested-disjoint}.

In the first two cases, the pairing switches for one of the two segments of the
birth curves that end at the intersection point in grade $b = (i,j)$; in the
third case, it switches for both.
The segments for which the pairing does change require updates to the
columns of the matrices $R$ and $V$ stored at their low corners, but each such
corner also results in an interval in the diagram, and we charge the update to
the output.
For the segments where the pairing doesn't switch, we have $V[\tsx,\ssx] = 0$
for all columns $V[\ssx]$ stored in the birth curve $\birth(\ssx)$, meaning
there is nothing to update.

The intervals $(a,b)$, for $a$ in the lower or upper corners of the birth curves, get
$+1$ or $-1$ as in Cases A and B. Specifically, taking the
birth curves before the pairing update, we output
$\Dgm[a,b] = +1$ for $a$ in the lower corners of $\birth(\ssx)$ and the upper corners $\birth(\tsx)$;
$\Dgm[a,b] = -1$ for $a$ in the upper corners of $\birth(\ssx)$ and the lower corners $\birth(\tsx)$.
(This again follows from \cref{thm:product-formula} in \cref{sec:single-pair}.)

The exception is when $a$ is the grade of the intersection of the two curves,
i.e.,
%the boundary of a segment, which is
the grade depicted in the figures.
Here, the analysis in \cref{sec:two-pairs}
applies: we get $\Dgm [a,b] = -1$ in
the first two cases, and $\Dgm [a,b] = -2$ in the third case.

\vspace{1ex}
\begin{center}
\includegraphics[page=5,width=\textwidth]{figures/pairing-down-cases}
\end{center}
\vspace{1ex}
\begin{center}
\includegraphics[page=6,width=\textwidth]{figures/pairing-down-cases}
\end{center}
\vspace{1ex}
\begin{center}
\includegraphics[page=7,width=\textwidth]{figures/pairing-down-cases}
\end{center}

After we perform all the updates, if the birth curve $\birth(\ssx)$ has a lower
corner at grade $(\cdot, j)$, we remove it. (This cannot happen
in the previous case of disjoint pairing.)

%\section{Beyond 1-Criticality}

%\Remark{If the 2-filtration is not 1-critical, there is one extra case, which
%        doesn't affect the pairing, but does affect the output. Explain it here,
%        if we have time and space.}

\subparagraph*{Infinite intervals.}
After the traversal, we output the ``infinite'' intervals
$\Dgm [a, (n,n)] = +1$ and $\Dgm [b, (n,n)] = -1$ for the lower corners $a$ and
the upper corners $b$ in the birth curves
of the implicit cells $\hat{\ssx}$.

\subsection{Analysis}
\label{sec:analysis}

After the initial $O(m^3)$ persistence computation, the algorithm takes $O(n^2)$ steps.
Each step requires an $O(m)$ update, plus an update of the birth curves that we
charged to the output: $O(m)$ time for each one of the $C$ intervals in the
output. The total running time is in $O(n^2m + Cm)$. It is immediate from
the algorithm that the size of the output $C$ is in $O(n^3)$, making the whole
algorithm no worse than $O(n^4)$ brute-force approach. On the other hand,
$C$ can be as low as $n$: for example, if the entire 2-filtration is totally
nested, i.e., if the grades of every pair of simplices are comparable in the
poset.

\section{Conclusion}

Generalized persistence diagrams are a promising direction for TDA research.
They generalize all the desirable properties of 1-parameter persistence used in
applications. Their structure offers the possibility of following the
1-parameter blueprint, including straightforward adaptation of the newer
methodologies, such as vectorizing persistence diagrams for use in machine
learning~\cite{PMK22,perslay} or modifying the input data by back-propagating gradients
through persistence diagrams~\cite{LOT22,NM22}.

Until now, the crucial missing piece was efficient computation. We hope that the
output-sensitive algorithm presented in this paper will pave the way for using
generalized persistence diagrams in applications.

%\clearpage
\bibliographystyle{hplain}
\bibliography{references}

\clearpage
\appendix
\section{Transposition Analysis}
\label{sec:transposition-analysis}

The updates to 1-dimensional persistence when two consecutive simplices, $\tsx$
and $\ssx$, transpose are studied in \cite{vineyards,morozov-phd}.  Given a
boundary matrix $D$ of a filtration, together with its decomposition $R = DV$
into a reduced matrix $R$ and an invertible upper-triangular matrix $V$, let $P$
denote the permutation matrix that transposes two adjacent columns (if
multiplied on the right) or rows (if multiplied on the left) that correspond to
the simplices $\tsx$ and $\ssx$. We are interested in finding the decomposition
$R' = (PDP) V'$, where $R'$ is reduced and $V'$ is invertible upper-triangular.

\begin{figure}
    \centering
    \includegraphics[width=\textwidth]{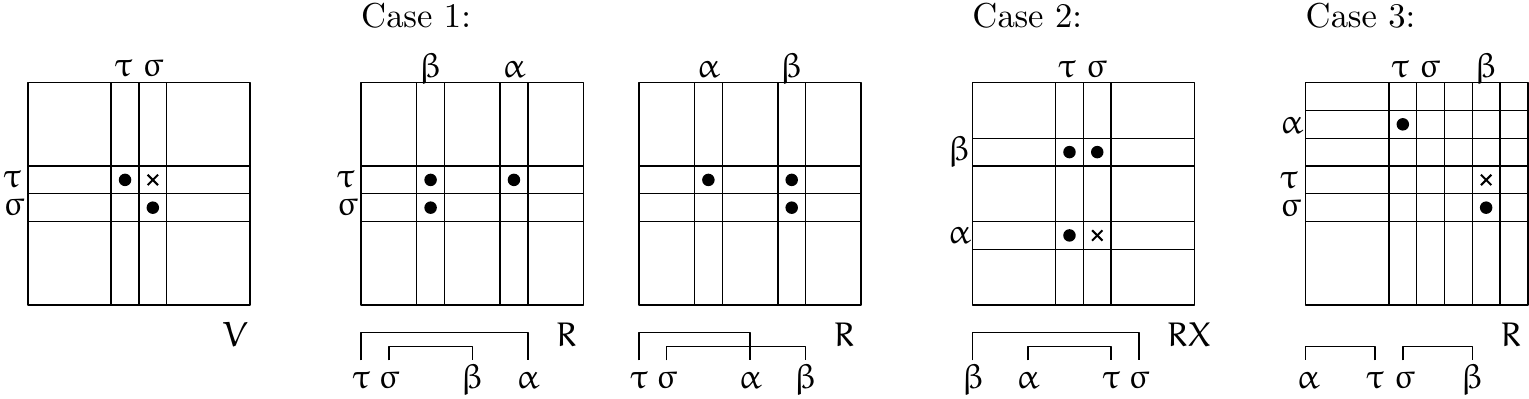}
    \caption{Updates of the matrices $R$ and $V$ after the transposition of two
             simplices.}
    \label{fig:RDV}
\end{figure}

\subparagraph*{Matrix $V$ update.}
The only way in which $V' = PVP$ can fail to satisfy this condition is if
$V[\tsx,\ssx] \neq 0$. If this is the case, we denote by $X$ the matrix that
subtracts an appropriate multiple $\lambda$ of column $V[\cdot, \tsx]$ from
column $V[\cdot, \ssx]$ to make $V[\tsx,\ssx] = 0$.
In this case, matrix $PVXP$ is invertible upper-triangular.
We abuse the notation, and let $X$ denote either this update matrix, or
identity if no update was required.

\subparagraph*{Case 1.}
Suppose both simplices $\tsx$ and $\ssx$ are positive, paired with
simplices $\alpha$ and $\beta$ respectively. (The case where either one of the
simplices is unpaired is analogous.) In this case, it's possible that columns
$R[\cdot,\alpha]$ and $R[\cdot,\beta]$ are such that $PRXP$ is not reduced.
This happens when $R[\tsx,\beta] \neq 0$; see \cref{fig:RDV}.
If this happens, we can subtract an appropriate multiple of the first of these
two columns from the second, to
ensure that the new matrix is reduced. Denoting this subtraction with matrix $Y$,
we get decomposition $(PRXPY) = (PDP) (PVXPY)$.

If the pairing was nested before the transposition, but did not switch after the
transposition (and thus became neither nested, nor disjoint), i.e., if $PRXP$ is
already reduced, it is possible for the entry $V[\beta,\alpha] \neq 0$. The
prior work~\cite{vineyards,morozov-phd} does not pay any special attention to
this case, but, in order to satisfy \cref{cnd:no-spurious-V} in the invariant in
\cref{sec:algorithm}, we need to subtract an appropriate multiple of the
column $V[\cdot,\beta]$ from the column $V[\cdot,\alpha]$. Denoting this update with matrix
$Y'$, we get a decomposition $(PRXPY') = (PDP) (PVXPY')$. In this case, matrix
$PRXPY'$ is necessarily reduced.
The pairs of any $\alpha'$ that were added as non-zero entries
$V'[\alpha',\alpha]$ are necessarily nested in the pair $\ssx$-$\alpha$.

\subparagraph*{Case 2.}
Suppose both simplices $\tsx$ and $\ssx$ are negative, paired with
simplices $\alpha$ and $\beta$ respectively. In this case, if $\alpha$ comes
after $\beta$, then the columns $(PRXP)[\cdot, \tsx]$ and $(PRXP)[\cdot, \ssx]$
may not be reduced because of the update caused by matrix $X$; see
\cref{fig:RDV}. In this case, we can apply matrix $Z$ after the
transposition. This matrix replaces the later column (of simplex $\tsx$ after
the transposition) by multiplying it by $\lambda$ and adding an earlier column.
In other words, we get
\begin{align*}
    (PRXPZ)[\cdot,\tsx] &= (PRXP)[\cdot,\ssx] + \lambda (PRXP)[\cdot,\tsx] \\
                        &= \left((PRP)[\cdot,\ssx] - \lambda (PRP)[\cdot,\tsx]\right) + \lambda (PRP)[\cdot,\tsx]       \\
                        &= (PRP)[\cdot,\ssx].
\end{align*}
(The same analysis applies to $(PVXPZ)[\cdot,\tsx]$.)
In other words, the second of the two columns in matrices $R$ and $V$ do not change.
The resulting matrix is reduced and the decomposition, $(PRXPZ) = (PDP) (PVXPZ)$,
satisfies the two conditions.

If the pairing went from nested to neither nested nor disjoint (i.e., it did not switch), the
original update $X$ to matrix $V$ ensures that \cref{cnd:no-spurious-V} in the
invariant in \cref{sec:algorithm} is satisfied.

\subparagraph*{Case 3.}
Suppose simplex $\tsx$ is negative, while simplex $\ssx$ is positive; again the
two are paired with $\alpha$ and $\beta$ respectively. If matrix $V$ required an
update, then because the column $R[\cdot,\ssx] = 0$,
the columns $\tsx$ and $\ssx$ in matrix $(PRXP)$ are the same, up to the factor
of $-\lambda$,
requiring a further reduction by an application of matrix $Z$ from Case 2.
We get decomposition, $(PRXPZ) = (PDP) (PVXPZ)$.
We note that it is not immediately obvious, but true that
$R[\tsx,\beta] \neq 0$ iff $V[\tsx,\ssx] \neq 0$.

%\Remark{We need one more statement for the down-down case in
%        \cref{sec:algorithm}. Namely that if $V[\tsx,\ssx] = 0$ for pairs that
%        are neither nested, nor disjoint, then it remains so after the update.
%        This is crucial to avoid having to update the irrelevant segments of the
%        birth curves in down-down case, which would kill the running time claim.}

%\section{Proofs}

%\PathInvariance*

\section{Algorithm Summary}
\label{sec:algorithm-summary}

\begin{algorithm}
    \DontPrintSemicolon
    \BlankLine
    compute 1-parameter persistence $R = DV$, simplices sorted by the lowest
    $x$-coordinate, $x(\ssx)$, in their appearance curve \;
    \ForEach{positive $\ssx$}{
        \If{$\ssx$ is paired with $\tsx$}{
            $\birth(\tsx) = ((R[\tsx],V[\tsx],V[\ssx], (x(\ssx),n)))$ \;
        }
        \ElseIf{$\ssx$ is unpaired}{
            pair $\ssx$ with $\hat{\ssx}$ implicitly added at grade $(n+1,n+1)$
                \nllabel{line:unpaired} \;
            $\birth(\hat{\ssx}) = ((R[\hat{\ssx}] = \ssx, V[\hat{\ssx}] = \hat{\ssx},V[\ssx], (x(\ssx),n)))$ \;
        }
    }
    \caption{High-level overview}
    \label{alg:sparsify}
\end{algorithm}

\setlength{\interspacetitleruled}{0pt}%
\setlength{\algotitleheightrule}{0pt}%
\begin{algorithm}
    \DontPrintSemicolon
    \setcounter{AlgoLine}{7}
    \For{$i=1$ \KwTo $n$}{
        \For{$j=n$ \KwTo $1$}{
            \If{$(i,j)$ is the lower-corner in the appearance curve of some $\ssx$}{
                \If{$\ssx$ is positive, paired with $\tsx$}{
                    set the $y$-coordinate of the last corner in $\birth(\tsx)$ to $j$\;
                }
                \ElseIf{$\ssx$ is negative}{
                    output $(+1, a, (i,j))$ for $a \in l(\birth(\ssx))$ \;
                    output $(-1, a, (i,j))$ for $a \in u(\birth(\ssx))$ \;
                }
            }
            \ElseIf{$(i,j)$ is the upper-corner in the appearance curve of some $\ssx$}{
                \If{$\ssx$ is positive, paired with $\tsx$}{
                    make a copy of the last corner in $\birth(\tsx)$, setting its
                    grade to $(i,j)$ (component $j$ will be updated later) \;
                }
                \ElseIf{$\ssx$ is negative}{
                    output $(-1, a, (i,j))$ for $a \in l(\birth(\ssx))$ \;
                    output $(+1, a, (i,j))$ for $a \in u(\birth(\ssx))$ \;
                }
            }
            \ElseIf{$(i,j)$ is the grade where $\ssx$ and $\tsx$ appear together for the first time}{
                \If{$\tsx$ and $\ssx$ are positive}{
                    let $\alpha$ be the pair of $\ssx$ and $\beta$, the pair of $\tsx$ \;
                    update the birth curve and $R[\alpha]$, $V[\alpha]$,
                        $V[\ssx]$, $R[\beta]$, $V[\beta]$, $V[\tsx]$ as described
                        in the text accompanying \cref{fig:pairing-up} \;
                }
                \ElseIf{$\tsx$ is negative, $\ssx$ is positive}{
                    let $\alpha$ be the pair of $\ssx$ \;
                    determine if the pairing switches (text accompanying \cref{fig:pairing-down-up}), extend $\birth(\alpha)$ accordingly \;
                    \If{the pairing switches}{
                        $\birth(\ssx)$ takes over $\birth(\tsx)$ \;
                        update the columns stored at the corners of $\birth(\ssx)$ \;
                        output $(-1, a, (i,j))$ for $a \in l(\birth(\ssx))$ \;
                        output $(+1, a, (i,j))$ for $a \in u(\birth(\ssx))$ \;
                    }
                }
                \ElseIf{$\tsx$ and $\ssx$ are negative}{
                    identify distinct segments of the birth curves $\birth(\tsx)$, $\birth(\ssx)$ \;
                    \ForEach{segment}{
                        determine if the pairing switches \;
                        update and output as described in the text accompanying \cref{fig:pairing-down}
                    }
                    \lIf{the first corner in $\birth(\ssx)$ is $(\cdot,j)$}{remove it}
                }
                \ElseIf{$\tsx$ is positive, $\ssx$ is negative}{
                    \lIf{the first corner in $\birth(\ssx)$ is $(\cdot,j)$}{remove it}
                }
            }
        }
    }
    \ForEach{$\hat{\ssx}$ added in Line \ref{line:unpaired}}{
        output $(+1, a, (n,n))$ for $a \in l(\birth(\hat{\ssx}))$ \;
        output $(-1, a, (n,n))$ for $a \in u(\birth(\hat{\ssx}))$ \;
    }
\end{algorithm}

\end{document}